\documentclass[11pt, a4paper,onecolumn]{belarticle4}
\pdfoutput=1
\usepackage{revsymb, amsmath, amsfonts, amssymb, enumerate, fullpage, amsthm, graphicx, braket, relsize, bbm, mathrsfs, mathtools}
\usepackage{graphicx,color}
\definecolor{darkbrown}{rgb}{0.4, 0.26, 0.13}
\definecolor{ao}{rgb}{0.0, 0.5, 0.0}
\definecolor{bleudefrance}{rgb}{0.19, 0.55, 0.91}

\usepackage{paralist}
\usepackage{subfigure}
\usepackage{soul}

\usepackage{tikz}
\usetikzlibrary{patterns}
\usetikzlibrary{calc,decorations.pathreplacing}
\usetikzlibrary{arrows,shapes}

\usepackage[linktocpage=true, colorlinks=true, linkcolor=blue, urlcolor=blue, citecolor=blue]{hyperref}

\newtheorem{theo}{Theorem}
\newtheorem{thm}[theo]{Theorem}
\newtheorem{prop}[theo]{Proposition}

\newtheorem{defn}[theo]{Definition}
\newtheorem{rem}[theo]{Remark}
\newtheorem{ex}[theo]{Example}

\newcommand{\cH}{\mathcal{H}}

\newcommand{\id}{\mathbb{I}}
\newcommand{\tr}[2]{\mathrm{tr}_{#2} \left\{ #1 \right\}}
\newcommand{\Tr}{\mathrm{tr}}

\newcommand{\cQ}{\mathcal{Q}}
\newcommand{\cNS}{\mathcal{NS}}
\newcommand{\X}{\mathbb{X}}

\newcommand{\A}{\mathbb{A}}

\newcommand{\As}{\boldsymbol{\Sigma}}

\newcommand{\T}{\mathbb{T}}

\newcommand{\blk}{\color{black}}
\definecolor{orangy}{RGB}{213,94,0}

\usepackage{cite} 

\usepackage{setspace}

\usepackage{tcolorbox}
\tcbuselibrary{skins,breakable}
\usetikzlibrary{shadings,shadows}

    {\endtcolorbox}

\usepackage[toc,page]{appendix}

\begin{document}

\title{Postquantum steering in scenarios with multiple characterised parties}

\author{Ana Bel\'en Sainz}
\affiliation{International Centre for Theory of Quantum Technologies, University of  Gda{\'n}sk, 80-309 Gda{\'n}sk, Poland \\ Basic Research Community for Physics e.V., Germany}

\date{}

\begin{abstract}
\vskip -1.7cm
The study of stronger-than-quantum phenomena (a.k.a., postquantum) has enabled a deeper understanding of the scope of quantum theory. Much is known about the case of correlations in Bell scenarios, where the so-called device-independent framework allowed us to explore its possibilities independently of the formalism of quantum theory quite earlier on. However, less is known about the phenomenon of Einstein-Podolsky-Rosen steering. Here, the so-called characterised parties are assumed to describe their systems locally through the quantum formalism, which  inconveniences  a theory-independent description. In addition, a renowned theorem by Gisin and Hughston, Josza and Wootters further hindered the discovery of the phenomenon. 
The study of postquantum steering, initiated about a decade ago, has been quite fruitful, including: the development of mathematical formalisms that frame the effect, resource theories that quantify it as a resource, and activation protocols that relate it to Bell correlations. However, all these results have a limitation in common: they apply to scenarios with only one quantum party.

In this work we articulate the concept of postquantum steering for scenarios with multiple quantum parties, bringing in the missing piece to the puzzle. We provide an algorithm to certify postquantumness, which in some cases also certifies quantumness. We also define a hierarchy of semidefinite programs that bounds the set of quantum assemblages from the outside. Moreover, we show that  the study of postquantum steering is fundamentally relevant since it is not just a mere mathematical curiosity allowed by the no-signalling principle, but it may arise within compositional theories beyond quantum theory. 
 Our work further discovers a peculiarity of steering: its theory-independent description  fundamentally prevents a direct connection with Bell nonlocality --  e.g., nonclassical Bell correlations do not imply nonclassical steering.
\end{abstract}

\vskip -3cm 

\maketitle

\begin{spacing}{0.3}
\vspace*{2cm}
\tableofcontents 
\end{spacing}

\section{Introduction}

Einstein-Podolsky-Rosen (EPR) steering is a puzzling nonlocal phenomenon featured of quantum theory \cite{schrodinger1935discussion,schrodinger36,wiseman2007steering}, first introduced by Schr\"odinger \cite{schrodinger1935discussion,schrodinger36}. In this setup, Alice and a distant Bob share a physical system, and the state of Bob's is seemingly remotely `steered' by Alice in a way that has no classical explanation, when she performs measurements on her share of the system. Steering has become popular within the quantum information community since it can be understood as a resource in situations where Alice's devices are uncharacterised or untrusted, enabling ``one-sided device independent'' implementations of information-theoretic tasks, such as quantum key distribution \cite{branciard2012one}, self-testing \cite{supic2016,gheorghiu2017}, randomness certification \cite{law2014quantum,passaro2015optimal}, and measurement incompatibility certification \cite{quintino2014joint,uola2014joint,cavalcanti2016quantitative}.

Quantum realisations of an EPR steering experiment may hence defy a classical explanation, and in doing so provide a quantum advantage in relevant tasks. The limits of this advantage and the scope of possibilities brought by quantum theory, however, are not yet fully understood, and an active area of research pertains to understanding the boundary between quantum steering and `steering allowed by other non-classical physical theories', hereon called `postquantum steering'. The study of this quantum boundary is similar in nature and motivation to the study of the boundary of quantum correlations in Bell experiments, where the mathematical formulation of Popescu-Rohrlich boxes \cite{popescu1994quantum} triggered a domino effect with impact on both foundational and applied topics \cite{brunner2014bell}. In the EPR steering scenario, however, such `quantum from the outside' exploration was only possible 
 after the discoveries of  
Ref.~\cite{sainz2015postquantum} due to a variety of challenges: (i) the EPR scenario has by definition a characterised quantum party, which make a theory-independent study complex, and (ii) the most studied EPR scenario is the bipartite one, and a celebrated theorem by Gisin \cite{gisin1989stochastic} and Hughston, Josza and Wootters \cite{hughston1993complete} implies that postquantum steering does not exist in such scenarios. How to mathematically articulate the concept of steering beyond that which quantum theory allows -- whilst still consistent with special relativity -- was firstly achieved in Ref.~\cite{sainz2015postquantum} for multipartite steering scenarios, and in Ref.~\cite{sainz2020bipartite} for generalised scenarios including bipartite ones. Since then, our understanding of postquantum steering has substantially progressed: we know that postquantum steering is a new phenomenon independent of postquantum Bell nonlocality \cite{sainz2015postquantum,sainz2020bipartite,sainz2018formalism}, postquantum steering provides a stringer-than-quantum advantage for some communication tasks \cite{cavalcanti2022post}, and postquantum steering may be activated to violate Bell-type inequalities in quantum networks \cite{sainz2025activation,zjawin2024activation}. New technical tools to study postquantum steering have been developed \cite{hoban2025hierarchy}, which enabled some of the above-mentioned studies. What is moreover fascinating, is that postquantum steering may go beyond being a mere mathematical curiosity, since some compositional foil theories feature the phenomenon \cite{cavalcanti2022post,cavalcanti2024every}. 

The study of postquantum steering, however, has so far featured a limitation: only scenarios with one characterised (quantum) party have been considered, such as those depicted in Fig.~\ref{fig:steebip}. In this paper we extend the exploration of postquantum steering to scenarios with multiple characterised parties, completing the puzzle of postquantumness for EPR scenarios. We start by identifying and formalising what the relevant object of study (assemblage) is in the study of this phenomenon -- this is the analogue of the `correlations' in a Bell scenario, or `assemblages' in EPR scenarios with only one characterised party. Then, we illustrate ways in which EPR steering may defy a quantum explanation in our scenario of interest. We develop techniques to certify postquantum steering, which we implement for our numerical exploration.  We then show that postquantum steering in these scenarios is also a phenomenon independent of Bell-type nonlocality, and that it may arise in compositionally-sound foil theories, making the study of postquantum steering relevant beyond as just a mathematical curiosity. We then discuss the robustness of the postquantum steering proofs presented in the paper.   
We further introduce a hierarchy of semidefinite programs that outer-approximates the set of quantumly-realisable assemblages.
 Throughout the manuscript we show that the relation between steering and Bell nonlocality is not a strict hierarchy (they are indeed incomparable phenomena) when one endorses a theory-independent operational definition of assemblage.

\begin{figure}
\begin{center}
\subfigure[\quad Traditional]{
		\begin{tikzpicture}[scale=0.5]
		\node at (-2,1.3) {Alice};
		\shade[draw, thick, ,rounded corners, inner color=white,outer color=gray!50!white] (-1.7,-0.3) rectangle (-2.3,0.3) ;
		\draw[thick, ->] (-2.5,0.5) to [out=180, in=90] (-2,0.3);
		\node at (-2.8,0.5) {$x$};
		\draw[thick, -<] (-2,-0.3) to [out=-90, in=180] (-2.5,-0.5);
		\node at (-2.9,-0.5) {$a$};

		\node at (2,1.3) {Bob};
		\shade[draw, thick, ,rounded corners, inner color=white,outer color=gray!50!white] (1.7,-0.3) rectangle (2.3,0.3) ;		
		\draw[thick, ->] (2,-0.3) -- (2,-0.7);
		\node at (2,-1) {$\sigma_{a|x}$};		
		
		\node at (0,0) {s};
		\draw[thick, dashed, color=gray!70!white] (0,0) circle (0.3cm);
		\draw[thick, dashed, color=gray!70!white, ->] (-0.3,0) -- (-1.6,0);
		\draw[thick, dashed, color=gray!70!white, ->] (0.3,0) -- (1.6,0);		
	    \end{tikzpicture}	}
	    	    \hspace{0.5cm}
\subfigure[\quad Multipartite-uncharacterised]{	    
	    \begin{tikzpicture}[scale=0.5]
		\node at (-2,1.3) {Alice$_1$};
		\shade[draw, thick, ,rounded corners, inner color=white,outer color=gray!50!white] (-1.7,-0.3) rectangle (-2.3,0.3) ;
		\draw[thick, ->] (-2.5,0.5) to [out=180, in=90] (-2,0.3);
		\node at (-2.8,0.5) {$x_1$};
		\draw[thick, -<] (-2,-0.3) to [out=-90, in=180] (-2.5,-0.5);
		\node at (-3,-0.5) {$a_1$};

		\node at (6,1.3) {Alice$_2$};
		\shade[draw, thick, ,rounded corners, inner color=white,outer color=gray!50!white] (5.7,-0.3) rectangle (6.3,0.3) ;
		\draw[thick, ->] (6.5,0.5) to [out=180, in=90] (6,0.3);
		\node at (6.8,0.5) {$x_2$};
		\draw[thick, ->] (6,-0.3) to [out=-90, in=180] (6.5,-0.5);
		\node at (7,-0.5) {$a_2$};

		\node at (2,1.3) {Bob};
		\shade[draw, thick, ,rounded corners, inner color=white,outer color=gray!50!white] (1.7,-0.3) rectangle (2.3,0.3) ;		
		\node at (2,-1) {$\sigma_{a_1a_2|x_1x_2}$};
		\draw[thick, ->] (2,-0.3) -- (2,-0.7);
		
		\node at (0,0) {s};
		\draw[thick, dashed, color=gray!70!white] (0,0) circle (0.3cm);
		\draw[thick, dashed, color=gray!70!white, ->] (-0.3,0) -- (-1.6,0);
		\draw[thick, dashed, color=gray!70!white, ->] (45:0.3) to [out=20, in=160] (5.6,0);		
		\draw[thick, dashed, color=gray!70!white, ->] (0.3,0) -- (1.6,0);

	    \end{tikzpicture}}  
	    \hspace{0.5cm}
\subfigure[\quad Bob with input]{	    
	    \begin{tikzpicture}[scale=0.5]
		\node at (-2,1.3) {Alice};
		\shade[draw, thick, ,rounded corners, inner color=white,outer color=gray!50!white] (-1.7,-0.3) rectangle (-2.3,0.3) ;
		\draw[thick, ->] (-2.5,0.5) to [out=180, in=90] (-2,0.3);
		\node at (-2.8,0.5) {$x$};
		\draw[thick, -<] (-2,-0.3) to [out=-90, in=180] (-2.5,-0.5);
		\node at (-2.9,-0.5) {$a$};

		\node at (2,1.3) {Bob};
		\shade[draw, thick, ,rounded corners, inner color=white,outer color=gray!50!white] (1.7,-0.3) rectangle (2.3,0.3) ;		
		\draw[thick, ->] (2.5,0.5) to [out=180, in=90] (2,0.3);
		\node at (2.7,0.5) {$y$};
		\draw[thick, ->] (2,-0.3) -- (2,-0.7);
		\node at (2,-1) {$\sigma_{a|xy}$};
		
		\node at (0,0) {s};
		\draw[thick, dashed, color=gray!70!white] (0,0) circle (0.3cm);
		\draw[thick, dashed, color=gray!70!white, ->] (-0.3,0) -- (-1.6,0);
		\draw[thick, dashed, color=gray!70!white, ->] (0.3,0) -- (1.6,0);		
		\draw[thick, dashed, color=gray!70!white, ->] (0.3,0) -- (1.6,0);		

	    \end{tikzpicture}}	    
\end{center}
\caption{\textbf{Different EPR setups:} (a) traditional scenario: Alice makes a measurement, steering the state of Bob; (b)  multipartite EPR scenario with one Bob: two parties make independent measurements on their share of a system and steer the state of Bob (c) Bob-with-input (BWI) scenario: Bob has an input, and influences his state preparation, by performing some operation on it.}
\label{fig:steebip}
\end{figure}
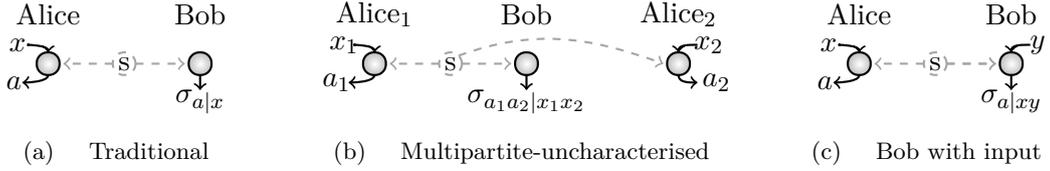

\section{The scenario: definitions}

The EPR scenario we explore is that where we have at least two characterised parties, such as the one depicted in Fig.~\ref{fig:steeint}. The simplest such scenario is that with only one Alice and two Bobs (see Fig.~\ref{fig:steeint}), and throughout this manuscript we will sometimes focus the discussion on this case for simplicity.

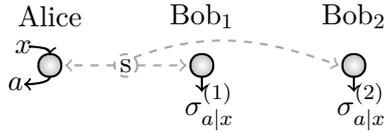
\begin{figure}
\begin{center}
	    
	    \begin{tikzpicture}[scale=0.5]
		\node at (-2,1.3) {Alice};
		\shade[draw, thick, ,rounded corners, inner color=white,outer color=gray!50!white] (-1.7,-0.3) rectangle (-2.3,0.3) ;
		\draw[thick, ->] (-2.5,0.5) to [out=180, in=90] (-2,0.3);
		\node at (-2.7,0.5) {$x$};
		\draw[thick, -<] (-2,-0.3) to [out=-90, in=180] (-2.5,-0.5);
		\node at (-2.9,-0.5) {$a$};

		\node at (6,1.3) {Bob${}_2$};
		\shade[draw, thick, ,rounded corners, inner color=white,outer color=gray!50!white] (5.7,-0.3) rectangle (6.3,0.3) ;
		\node at (6.2,-1.1) {$\sigma^{(2)}_{a|x}$};
		\draw[thick, ->] (6,-0.3) -- (6,-0.7);

		\node at (2,1.3) {Bob${}_1$};
		\shade[draw, thick, ,rounded corners, inner color=white,outer color=gray!50!white] (1.7,-0.3) rectangle (2.3,0.3) ;		
		
		\node at (2.2,-1.1) {$\sigma^{(1)}_{a|x}$};
		\draw[thick, ->] (2,-0.3) -- (2,-0.7);

		\node at (0,0) {s};
		\draw[thick, dashed, color=gray!70!white] (0,0) circle (0.3cm);
		\draw[thick, dashed, color=gray!70!white, ->] (-0.3,0) -- (-1.6,0);
		\draw[thick, dashed, color=gray!70!white, ->] (45:0.3) to [out=20, in=160] (5.6,0);		
		\draw[thick, dashed, color=gray!70!white, ->] (0.3,0) -- (1.6,0);		

	    \end{tikzpicture}  
\end{center}
\caption{\textbf{EPR setup of interest:} EPR steering scenario where one uncharacterised party (Alice) steers the states of the subsystems held by two characterised parties (Bobs). 
}
\label{fig:steeint}
\end{figure}

The first question is how the steering scenario should be defined from an operational point of view. For example, in the case of a Bell scenario, this characterisation is done by saying what the number of parties is, what the number of measurements that each party has access to is, and what the number of outcomes of each measurement is. For this steering scenario, let $n$ denote the number of uncharacterised parties (Alices). For each Alice ($j=1:n$), then, denote by $\X_j$ the set of labels for her measurement choices, and by $\A^j_k$ the set of labels for the outcomes of her $k$-th measurement. In addition, let us denote by $N$ the number of characterised parties in the scenario (Bobs). For each Bob, let $d_\ell$ be the dimension of the Hilbert space where the state of his system is described. Then, a natural set of parameters to specify this steering scenario arises.
\begin{defn}\textbf{Parameters to specify our multipartite steering scenario of interest.--}\\
A multipartite steering scenario with many characterised parties (Bobs)  is specified by the following parameters:
\begin{align}\label{eq:tuple}
\T = \left(n, \{|\X_j|\}_{j=1:n},\{|\A^j_k|\}_{k=1:|\X_j|,j=1:n},N,\{d_\ell\}_{\ell=1:N} \right)\,,
\end{align}
where $n$ denotes the number of Alices, $\X_j$ denotes the set of labels for the measurement choices of the $j$-th Alice, $\A^j_k$ denotes the set of labels for the outcomes of the $k$-th measurement of the $j$-th Alice, $N$ denotes the number of Bobs, and $d_\ell$ denotes the dimension of the Hilbert space where the system of the $\ell$-th Bob is described. 
\end{defn}

Similarly to the case of Bell scenarios, whenever $|\A^j_k|=|\A^{j'}_{k'}| \quad \forall \, j,j',k,k'$ and $|\X_j| = \X_{j'}| \quad \forall \, j,j'$, we will specify the tuple of Eq.~\eqref{eq:tuple} simply by
\begin{align}\label{eq:tuples}
\T = \left(n, |\X|,|\A|,N,\{d_\ell\}_{\ell=1:N} \right)\,.
\end{align}
In addition, whenever the Hilbert space dimension of the characterised parties is the same, i.e., $d_\ell = d_{\ell'} \quad \forall \, \ell,\ell'$, we will further simply the specification of the tuple of Eq.~\eqref{eq:tuple} and write
\begin{align}\label{eq:tupless}
\T = \left(n, \{|\X_j|\}_{j=1:n},\{|\A^j_k|\}_{k=1:|\X_j|,j=1:n},N,d_\ell \right)\,.
\end{align}

Now that we have a way to specify the EPR scenario, the next question is how to define the object of interest from an operational viewpoint, i.e., \textbf{what is an assemblage} in this scenario. We will denote an assemblage by $\As_\T$. Our fundamental grounds here are that we have only access to the \textit{local} information of the parties (both Alices and Bobs). With the classical information from the Alices we can infer a joint conditional probability distribution for their output statistics, like traditionally done in Bell experiments. However, the same is not true for the data observed by the Bobs: we do not want to make any assumptions on how this information is processed to generate a global state of a joint $N$-partite physical system\footnote{For instance, we cannot define the assemblage elements to involve the tensor product of the individual description of the states of the subsystems (i.e., $\sigma^{(1)}_{a|x} \otimes \sigma^{(2)}_{a|x}$).}. Hence, we need to describe $\As_\T$ in terms of the statistical data observed by the Alices together with the local quantum states describing the system of each Bob. Therefore, our object of study is specified by the following collection of tuples: 

\begin{defn}\label{def:assem}\textbf{Assemblage.--}\\
An assemblage $\As_\T$ in the steering scenario specified by $\T$ is given by the following assemblage elements: 
\begin{align}
\As_\T = \left\{ \left(p(a_1 \ldots a_n|x_1 \ldots x_n), \rho^{(1)}_{a_1 \ldots a_n|x_1 \ldots x_n} ,\ldots, \rho^{(N)}_{a_1 \ldots a_n|x_1 \ldots x_n}\right)  \right\}_{a_1 \ldots a_n,x_1 \ldots x_n}\,,
\end{align}
where $p(a_1 \ldots a_n|x_1 \ldots x_n)$ is the probability that the Alices locally obtain outcomes $a_1 \ldots a_n$ when locally performing the measurements labelled by $x_1 \ldots x_n$, and $\rho^{(j)}_{a_1 \ldots a_n|x_1 \ldots x_n}$ is the normalised quantum state that describes  the subsystem of the $j$-th characterised party when the Alices perform measurements $x_1 \ldots x_n$ and obtain outcomes $a_1 \ldots a_n$. 
\end{defn}
For simplicity in the notation, let us denote $\mathbf{a} := a_1 \ldots a_n$ and $\mathbf{x} := x_1 \ldots x_n$. 

\bigskip 

One could also mathematically incorporate the correlations $p(\mathbf{a}|\mathbf{x})$ as the normalisation of the states of each local system, in the following way: 
\begin{align}\label{eq:asssub}
\sigma^{(j)}_{\mathbf{a}|\mathbf{x}} = p(\mathbf{a}|\mathbf{x}) \, \rho^{(j)}_{\mathbf{a}|\mathbf{x}} \,,
\end{align}
where now $\sigma^{(j)}_{\mathbf{a}|\mathbf{x}}$ denotes the (possibly subnormalised) state of the subsystem held by the $j$-th Bob. With this convention, an assemblage can equivalently be specified as 
\begin{align}
\As_\T = \left\{ \left( \sigma^{(1)}_{\mathbf{a}|\mathbf{x}} ,\ldots, \sigma^{(N)}_{\mathbf{a}|\mathbf{x}}\right)  \right\}_{\mathbf{a},\mathbf{x}}\,,
\end{align}
where $p(\mathbf{a}|\mathbf{x}) = \tr{\sigma^{(k)}_{\mathbf{a}|\mathbf{x}}}{}$ for all $k$.

Notice that, from a fundamental perspective, this operational choice of definition of assemblage $\As_\T$ contains no information on how the Bobs are correlated. Hence, were the Bobs to measure their quantum systems, all one can infer are the correlations $p(\mathbf{a}b_k|\mathbf{x}y_k)$ among the Alices and each Bob.

\bigskip 

A remark is in order about this choice for the definition of assemblage. Steering with more than one characterised party has been explored in the context of quantum theory featuring different types of entanglement \cite{cavalcanti2015detection,uola2020quantum}. The study then starts from a quantum underpinning of the experiments and asks for alternative explanations\footnote{Various works in the literature tackle  this scenario from the viewpoint of entanglement certification, so sometimes the question is whether a quantum system prepared in a `fully separable state' can reproduce the assemblage (which is equivalent to having the parties construct the assemblage by performing possibly-quantum local operations correlated via a classical common cause), whereas in some other cases people just ask whether genuinely-multipartite entanglement is necessary for reproducing the assemblage.}, which may sometimes be classical. Importantly, the collection of density matrices under study (what is there defined as an assemblage) comprise the joint quantum system of all the Bobs. That is, an assemblage in these multipartite quantum steering scenarios with many Bobs contain information on how these Bobs are correlated. This is in great contrast to what we do in this paper, where we only leverage the local information that the Bobs have access to.  The idea here is that we do not want to assume that quantum theory underpins our experiment, but we want to acknowledge that locally the Bobs can characterise their systems using the quantum formalism (after all, quantum is a valid theory that should still hold  in some regime of applicability even if a superseding theory emerges). Hence the question is: what opportunities arise when we allow for the global physics to be underpinned by some other exotic theory that locally looks quantum in this experiment.  
In Ap.~\ref{app:compare} we briefly discuss classical vs.~quantum steering in our scenario, and how this compares with the state of the art.

\bigskip

Since in this paper we are interested in the interface between quantum and postquantum theories, now we need to specify  two things: what we mean for an \textbf{assemblage to be non-signalling} and \textbf{to admit of a quantum realisation in this EPR scenario}. Let us start with the former.

Recall that an assemblage may be specified by ${\As_\T = \left\{ \left( p(\mathbf{a}|\mathbf{x}), \rho^{(1)}_{\mathbf{a}|\mathbf{x}} ,\ldots, \rho^{(N)}_{\mathbf{a}|\mathbf{x}}\right)  \right\}_{\mathbf{a},\mathbf{x}}}$. Hence, a necessary feature for the assemblage to be non-signalling is that the correlations $p(\mathbf{a}|\mathbf{x})$ among the Alices be non-signalling. The Bobs don't perform any measurements on their systems, so they cannot signal to any other party. What remains to check is what conditions prevent any collection of Alices to signal to any collection of Bobs. So let us begin the discussion with signalling across the bipartition Alices vs Bobs: here, when tracing out the Alices, each Bob ends up with a system prepared in the normalised reduced state 
\begin{align}
\sigma^{(k)}_{\mathbf{R},\mathbf{x}} = \sum_{\mathbf{a}} p(\mathbf{a}|\mathbf{x}) \, \rho^{(k)}_{\mathbf{a}|\mathbf{x}}\,, \forall \, k = 1:N\,.
\end{align}
Now, for the Alices to not signal any single Bob, one must have that each reduced state $\sigma^{(k)}_{\mathbf{R},\mathbf{x}}$ do not depend on $\mathbf{x}$. 
Notice that, in this case, if a subset of the Bobs get together, they will not  be able to infer any particular value of $\mathbf{x}$ just by comparing the reduced states of their quantum systems. 
Hence, the condition that $\sigma^{(k)}_{\mathbf{R},\mathbf{x}}$ be independent of $\mathbf{x}$ is sufficient to prevent the Alices to signal to any collection of the Bobs too. One can now ask the question of a subset of the Alices signalling to a (possibly complete) subset of the Bobs. Let us denote by $\mathbf{S} = \{i_1,\ldots,i_r\}$ the subset of $r \leq n$ uncharacterised parties $\{A_{i_1},\ldots,A_{i_r}\}$. If we take the marginal over the parties not in the set $\mathbf{S}$, the Bobs end each with a system prepared in the state 
\begin{align}\label{eq:8}
\sigma^{(k)}_{\mathbf{a}_\mathbf{S},\mathbf{x}} = \sum_{a_j \in \A^j_{x_j} \,|\, j \not\in \mathbf{S}}  p(\mathbf{a}|\mathbf{x}) \, \rho^{(k)}_{\mathbf{a}|\mathbf{x}}\,,
\end{align}
where $\mathbf{a}_\mathbf{S}$ denotes the measurement outcomes of the subset of Alices in $\mathbf{S}$. 
Now, for the marginalised Alices (i.e., those not in $\mathbf{S}$) to not be able to signal to each single Bob, we need that $\sigma^{(k)}_{\mathbf{a}_\mathbf{S},\mathbf{x}}$ does not depend on their choice of measurement. That is, we need that $\sigma^{(k)}_{\mathbf{a}_\mathbf{S},\mathbf{x}} = \sigma^{(k)}_{\mathbf{a}_\mathbf{S},\mathbf{x}_\mathbf{S}}$ for all $\mathbf{x}$ and for all $k=1,\ldots,N$, where $\mathbf{x}_\mathbf{S}$ denotes the measurement choices of the subset of Alices in $\mathbf{S}$. Notice first that, when $\mathbf{S} = \emptyset$, then we recover the condition we discussed before for $\sigma^{(k)}_{\mathbf{R},\mathbf{x}}$. Second, notice that this condition is also sufficient to guarantee that the marginalised Alices cannot signal to any subset of the Bobs should the latter come together  and compare the reduced states of their quantum systems. 
 Hence, we have a complete set of conditions to ensure that an assemblage is non-signalling. 
These conditions actually admit of a compact form if one recalls the definition of no-signalling assemblages for EPR scenarios with a single Bob \cite{sainz2018formalism,sainz2015postquantum}. Indeed, consider the steering scenario defined by all the $n$ Alices and a single Bob (say, the the $k$-th one). The assemblage $\left\{ \left( p(\mathbf{a}|\mathbf{x}), \rho^{(k)}_{\mathbf{a}|\mathbf{x}}\right)  \right\}_{\mathbf{a},\mathbf{x}}$ in this $n+1$ EPR scenario with a single-Bob is non-signalling precisely if and only if Eq.~\eqref{eq:8} is satisfied. This leads to the following definition. 

\begin{defn}\textbf{Weakly-Non-signalling assemblage.--}\\
Consider an EPR scenario specified by $\T$, and an assemblage $\As_\T = \left\{ \left( \sigma^{(1)}_{\mathbf{a}|\mathbf{x}} ,\ldots, \sigma^{(N)}_{\mathbf{a}|\mathbf{x}}\right)  \right\}_{\mathbf{a},\mathbf{x}}$.  

\noindent $\As_\T $ is non-signalling if and only if the following are satisfied: 
\begin{compactenum}
\item The correlations $p(\mathbf{a}|\mathbf{x}) = \tr{\sigma^{(k)}_{\mathbf{a}|\mathbf{x}}}{}$ are non-signalling. 
\item The single-Bob assemblage $\As_\T^k = \left\{\sigma^{(k)}_{\mathbf{a}|\mathbf{x}} \right\}_{\mathbf{a},\mathbf{x}}$ is non-signalling for each $k=1,\ldots,N$. 
\end{compactenum}
We denote by $\cNS_\T$ the set of all assemblages that are  weakly-non-signalling  for a given scenario $\T$. 
\end{defn}

\bigskip

You may notice that we have used the term \textit{weakly-non-signalling} in the definition above. This is because, as discussed previously, this only pertains to the Bobs trying to infer $\mathbf{x}$ by merely comparing the reduced states of their systems. But one could take a step further and let the Bobs make measurements on their quantum systems (possibly jointly) and then make inferences on  $\mathbf{x}$ from the collected joint statistics. Preventing this from happening would entail a stronger form of non-signalling. Now, how to impose this in full generality is a tricky problem, since which measurements can the Bobs perform globally depends on the actual theory underpinning the experiment. For instance, if quantum theory holds globally, then the Bobs can perform entangling Bell measurements on their systems, whereas if the experiment is underpinned by Witworld then they can only measure their systems with separable quantum measurements. To explore this stronger form of non-signalling constraints, we find it useful to revise the notion of steering in foil theories \cite{jenvcova2022assemblages}.
In this type of steering, explored for a single Bob, one assumes a specific theory as underpinning the experiment (classical, quantum, or some foil theory). Bob then is though of as in possession of a system of a type allowed by that theory, and we study the collection of (possibly subnormalised) states of those (possibly not quantum) systems. Hence, the assemblages are not necessarily given by states of quantum systems. To relate this to our case of study, we can think of an assemblage ${\As_\T = \left\{ \left( \sigma^{(1)}_{\mathbf{a}|\mathbf{x}} ,\ldots, \sigma^{(N)}_{\mathbf{a}|\mathbf{x}}\right)  \right\}_{\mathbf{a},\mathbf{x}}}$ as being underpinned by an assemblage ${\As_\T^G = \left\{ \tau_{\mathbf{a}|\mathbf{x}} \right\}_{\mathbf{a},\mathbf{x}}}$ of states of a system from a theory $G$ -- where each $\tau_{\mathbf{a}|\mathbf{x}}$ is a valid state of the $N$-partite system in $G$ -- when $\sigma^{(k)}_{\mathbf{a}|\mathbf{x}} = \mathrm{trace}_{\neg k} \left[ \tau_{\mathbf{a}|\mathbf{x}} \right] $, where  by $\mathrm{trace}_{\neg k}$ we mean the discard operation in $G$ of all subsystems apart from the $k$-th one. Should this theory actually underpin our physical experiment, then we can conclude that our assemblage $\As_\T$ does not allow the Alices to signal to any (possibly complete) subset of the Bobs. However, if such a theory exists, this doesn't rule out the possibility that the experiment was instead carried out using a different theory that allows for signalling among the parties when the Bobs measure.  This conundrum makes us take a step back and ask: why do we actually care about this stronger form of non-signalling? 
Indeed, at the end of the day all we want is to make the following argument: if an assemblage defies a quantum explanation, can this not be just a consequence of signalling. Given this, we settle for the following conditions for strongly-non-signalling assemblages.

\begin{defn}\textbf{Strongly-Non-signalling assemblage.--}\\
Consider an EPR scenario specified by $\T$, and an assemblage $\As_\T = \left\{ \left( \sigma^{(1)}_{\mathbf{a}|\mathbf{x}} ,\ldots, \sigma^{(N)}_{\mathbf{a}|\mathbf{x}}\right)  \right\}_{\mathbf{a},\mathbf{x}}$.  

\noindent $\As_\T $ is strongly-non-signalling if and only if the following are satisfied: 
\begin{compactenum}
\item $\As_\T$ is weakly-non-signalling,
\item There exists a theory $G$ that includes local quantum systems, an $N$-partite system $s_N$ whose subsystems are quantum (with the $k$-th system belonging to $\cH_{d_k}$), and an assemblage ${\As_\T^G = \left\{ \tau_{\mathbf{a}|\mathbf{x}} \right\}_{\mathbf{a},\mathbf{x}}}$ of states of $s_N$, such that $\sigma^{(k)}_{\mathbf{a}|\mathbf{x}} = \mathrm{trace}_{\neg k} \left[ \tau_{\mathbf{a}|\mathbf{x}} \right] $ for all $k=1,\ldots,N$,
\item $\sum_{\mathbf{a}} \tau_{\mathbf{a}|\mathbf{x}} = \sum_{\mathbf{a}} \tau_{\mathbf{a}|\mathbf{x'}} $ for all $\mathbf{x},\mathbf{x'}$.
\end{compactenum}
\end{defn}

Notice therefore that, if an strongly-non-signalling assemblage does not admit of a quantum realisation (defined below), then there exists an explanation of this phenomenon that does not require signalling. 

\bigskip

We can now discuss the notion of an assemblage being quantumly realisable.

\begin{defn}\textbf{Quantum realisation of an assemblage.--}\\
Given an EPR scenario specified by $\T$, an assemblage $\As_\T$ admits of a quantum realisation if there exists a Hilbert space $\cH_{A_j}$ for each $j=1:n$, a quantum state $\rho$ in the Hilbert space $\left(\otimes_{j=1:n}\cH_{A_j} \right) \otimes \left(\otimes_{\ell=1:N}\cH_{d_\ell}\right)$, and measurements $\{\{M^{(j)}_{a_j|x_j}\}_{a_j\in\A^j_x}\}_{x \in \X_j}$ for each Alice $j=1:n$, such that:
\begin{align}
p(\mathbf{a}|\mathbf{x}) &= \Tr\left\{ \left(\otimes_{j=1:n}M^{(j)}_{a_j|x_j} \right) \otimes \id_{d_1 \ldots d_N} \, \rho  \right\} \,,\\ 
\sigma^{(k)}_{\mathbf{a}|\mathbf{x}} &= \Tr_{\cH_{\neg k}} \left\{\left(\otimes_{j=1:n}M^{(j)}_{a_j|x_j} \right) \otimes \id_{d_1 \ldots d_N} \, \rho \right\}\,,
\end{align}
where $\cH_{\neg k} := \left(\otimes_{j=1:n}\cH_{A_j} \right) \otimes \left(\otimes_{\ell=1:N\,,\ell \neq k}\cH_{d_\ell}\right)$. 

 We denote by $\cQ_\T$ the set of all assemblages that admit of a quantum realisation for a given scenario $\T$.  
\end{defn} 

 Notice that the states and measurements that provide a quantum realisation for an assemblage $\As_\T$, guarantee that the assemblage is  strongly-non-signalling  by construction. Hence, $\cQ_\T \subseteq \cNS_\T$.  

As a remark, notice that a quantumly-realisable assemblage always features quantum correlations among the Alices:

\begin{rem}
If the assemblage is quantumply-realisable, then the correlations  ${p(\mathbf{a}|\mathbf{x}) = \tr{\sigma^{(k)}_{\mathbf{a}|\mathbf{x}}}{}}$   observed between the measurement outcomes of the Alices admit of a quantum realisation. 
\end{rem}

\section{Steering beyond quantum theory: proof of principle}

We can now explore the postquantum realm of assemblages in this multipartite EPR scenario. The following is an example of postquantum steering, whose simplicity serves as a friendly proof-of-principle.

\begin{ex} \textbf{Postquantum steering example.--}\\
A natural example of postquantum steering arises in the scenario $\T = (n=2,|x|=2,|a|=2,N=2,d_1=2,d_2=2)$. We take the labels of Alices' measurements and outcomes to be $\X_1 = \X_2 = \A^j_k = \{0,1\}$. Define the assemblage $\As_\T$ as that with elements:
\begin{align}
\left(p(a_1a_2|x_1x_2),\frac{\id_2}{2},\frac{\id_2}{2} \right)\,\\
\end{align}
where $p(a_1a_2|x_1x_2) = \frac{1}{2}\delta_{a_1 \oplus a_2 = x_1x_2}$ are Popescu-Rohrlich correlations \cite{popescu1994quantum}. 

The assemblage $\As_\T$ is postquantum. This follows from the previous Remark and noticing that the correlations between the Alices are not quantum (they are a PR-box) \cite{popescu1994quantum}.  Notice moreover that the assemblage is non-signalling, since Popescu-Rohrlich correlations are themselves non-signalling.  In addition, this assemblage is strongly-non-signalling  since it may arise as a common-cause process within the toy theory Witworld \cite{cavalcanti2022post}. 
\end{ex}

Although this example shows how postquantumness can arise here, it is not the most enlightening, since the Bobs play no significant role in the phenomenon. What we need is an example where the specific states of the systems held by the Bobs play a meaningful role. 
The challenge here is that we do not have any information about any way that the systems of the characterised parties correlate; we only have partial information about the physical situation given by the local marginals. In the next section we discuss tools to explore the scenario in a holistic way, and present meaningful examples. 

\section{Postquantumness certification}\label{se:ntc}

We will now discuss an approach to certifying postquantumness of assemblages in scenarios with two or more Bobs. The technique focuses on finding global quantum states that have the Bobs' states as marginals. 

\bigskip 

For clarity in the presentation let us focus on the case where the scenario $\T$ has one Alice ($n=1$) and two Bobs ($N=2$) -- the generalisation to arbitrary $\T$ will be made explicit whenever it is not straightforward. An assemblage in this scenario is given by 
\begin{align}
\As_\T = \{ (p(a|x), \rho^{(1)}_{a|x}, \rho^{(2)}_{a|x})  \}_{a|x}\,.
\end{align}
In order for this to admit of a quantum realisation, we need the following: 
\begin{compactitem}
\item[(i)] For each  $(a,x)$, $ \rho^{(1)}_{a|x}$ and $\rho^{(2)}_{a|x}$ should be the marginals of a bipartite quantum state $\rho_{a|x}$.
\item[(ii)] We need that these bipartite quantum states satisfy $\sum_{a} p(a|x) \rho_{a|x} = \sum_{a} p(a|x) \rho_{a|x'}$ for all $x,x'$  (strongly-non-signalling  from the Alices to the Bobs).  
\item[(iii)] We need that the assemblage $\{(p(a|x),\rho_{a|x})\}$ in a scenario with a single Bob  (i.e., all Bobs are thought of as a single party)  admits of a quantum realisation\footnote{This statement is trivially satisfied in the scenario where we have only one Alice -- as the one made explicit here -- due to the GHJW theorem. However, this condition is not trivial for scenarios where we have more than one Alice.}. 
\end{compactitem}
If given an assemblage $\As_\T$ there does not exist a collection of bipartite states $\{\rho_{a|x}\}_{a,x}$ satisfying conditions (i) to (iii), then $\As_\T$ is postquantum. This can be formalised in the following theorem.

\

\begin{thm}\label{thm:hasmarginal}[\textbf{Postquantumness certification}] \quad \\
Consider a steering scenario $\T$, and specify an assemblage $\As_\T$ by the parameters
\begin{align}
\As_\T = \left\{ \left( \sigma^{(1)}_{\mathbf{a}|\mathbf{x}} ,\ldots, \sigma^{(N)}_{\mathbf{a}|\mathbf{x}}\right)  \right\}_{\mathbf{a},\mathbf{x}}\,,
\end{align}
where recall that $p(\mathbf{a}|\mathbf{x}) = \tr{\sigma^{(k)}_{\mathbf{a}|\mathbf{x}}}{} \quad \forall \, k, \mathbf{a}, \mathbf{x}$.

Then, $\As_\T$ admits of a quantum realisation if and only if the following conditions hold: 
\begin{compactenum}
\item There exists an assemblage of $N$-partite  (possibly subnormalised)  states $\sigma_{\mathbf{a}|\mathbf{x}}$ in $\cH_1 \otimes \ldots \otimes \cH_N$ (where $\cH_k$ denotes the Hilbert space of $\sigma^{(k)}_{\mathbf{a}|\mathbf{x}}$), such that
\begin{align}\label{eq:hasmarginal2}
\tr{\sigma_{\mathbf{a}|\mathbf{x}}}{\otimes_{j \neq k} \cH_j} = \sigma^{(k)}_{\mathbf{a}|\mathbf{x}} \quad \forall \, k, \mathbf{a}, \mathbf{x}\,, 
\end{align}
\item The $n+1$-partite assemblage $\As = \{\sigma_{\mathbf{a}|\mathbf{x}}\}$ where the Bobs are thought of as a single party admits of a quantum realisation.  
\end{compactenum}
Moreover, if the scenario $\T$ is such that $n=1$ (i.e., there's only one Alice), then condition 1 is not only necessary but also sufficient for $\As_\T$ to be quantumly realisable. 
\end{thm}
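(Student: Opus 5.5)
The proof splits into the two directions of the equivalence, followed by the special case $n=1$.

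\emph{Necessity.} Suppose $\As_\T\in\cQ_\T$, realised by a state $\rho$ on $\left(\otimes_{j}\cH_{A_j}\right)\otimes\left(\otimes_\ell \cH_{d_\ell}\right)$ and measurements $\{M^{(j)}_{a_j|x_j}\}$. Define the joint Bob operators
\begin{align}
\sigma_{\mathbf{a}|\mathbf{x}} := \Tr_{\otimes_j \cH_{A_j}}\left\{\left(\otimes_{j=1:n}M^{(j)}_{a_j|x_j}\right)\otimes \id_{d_1\ldots d_N}\,\rho\right\}.
\end{align}
Each $\sigma_{\mathbf{a}|\mathbf{x}}$ is positive semidefinite and subnormalised. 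The partial trace is transitive: tracing out $\otimes_{j\neq k}\cH_j$ from $\sigma_{\mathbf{a}|\mathbf{x}}$ is the same as tracing out $\cH_{\neg k}$ from the original expression. Hence Eq.~\eqref{eq:hasmarginal2} holds, which is condition 1. The same state and measurements, now with all Bobs grouped into one party on $\otimes_\ell\cH_{d_\ell}$, realise $\{\sigma_{\mathbf{a}|\mathbf{x}}\}$ as a single-Bob assemblage, which is condition 2.

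\emph{Sufficiency.} Assume conditions 1 and 2. Condition 2 supplies Hilbert spaces $\cH_{A_j}$, a state $\rho$ on $\left(\otimes_j\cH_{A_j}\right)\otimes\cH_{B}$ with $\cH_B=\otimes_\ell\cH_{d_\ell}$, and measurements such that $\sigma_{\mathbf{a}|\mathbf{x}}=\Tr_{A}\{(\otimes_j M^{(j)}_{a_j|x_j}\otimes\id_B)\rho\}$. One must check that this realisation is compatible with the tensor factorisation of $\cH_B$. It is, because condition 1 fixes the $\sigma_{\mathbf{a}|\mathbf{x}}$ as operators on $\cH_1\otimes\ldots\otimes\cH_N$, so the single-Bob realisation can be taken on that same space. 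Tracing out all but the $k$-th Bob then gives $\sigma^{(k)}_{\mathbf{a}|\mathbf{x}}$, again by transitivity of partial traces and Eq.~\eqref{eq:hasmarginal2}. Taking the full trace gives $p(\mathbf{a}|\mathbf{x})$. This is exactly the definition of a quantum realisation of $\As_\T$.

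\emph{The case $n=1$.} Here the task is to show that condition 1 implies condition 2, so that condition 2 becomes redundant. This is the step I expect to be the main obstacle, because it relies on the GHJW theorem, and GHJW requires the single-Bob assemblage $\{\sigma_{a|x}\}$ to be non-signalling: $\sum_a\sigma_{a|x}$ must be independent of $x$. Condition 1 only fixes marginals, so it does not by itself guarantee this. My plan is to read condition 1, in this case, as including the constraint (ii) stated before the theorem, namely $\sum_a\sigma_{a|x}=\sum_a\sigma_{a|x'}$. This is the strongly-non-signalling requirement the paper imposes on the joint $\sigma$'s.

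Given that constraint, set $\sigma_R=\sum_a\sigma_{a|x}$ and take a purification $|\psi\rangle\in\cH_A\otimes\cH_B$ of $\sigma_R$. On the support of $\sigma_R$, define
\begin{align}
M_{a|x}=\left(\sigma_R^{-1/2}\sigma_{a|x}\sigma_R^{-1/2}\right)^{T},
\end{align}
completed to identity off the support. These operators are positive and sum to the identity, so they form valid POVMs, and they reproduce $\sigma_{a|x}$ on Bob's side. This is the standard GHJW/Schrödinger construction, which the paper's footnote already invokes. In the write-up I would state this interpretation of condition 1 explicitly, since the purely marginal version of condition 1 is not sufficient without it.
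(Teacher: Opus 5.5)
Your proof is correct and follows the same route as the paper. Necessity defines $\sigma_{\mathbf{a}|\mathbf{x}}$ as the partial trace over the Alices of the realising state. Sufficiency takes the single-Bob realisation from condition 2 on $\cH_1\otimes\cdots\otimes\cH_N$. The case $n=1$ goes via GHJW.

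Your caveat for $n=1$ is well founded, and sharper than the paper's one-line appeal to GHJW. Read literally, condition 1 is always satisfied by $\sigma^{(1)}_{\mathbf{a}|\mathbf{x}}\otimes\cdots\otimes\sigma^{(N)}_{\mathbf{a}|\mathbf{x}}/p(\mathbf{a}|\mathbf{x})^{N-1}$ (or $0$ when $p(\mathbf{a}|\mathbf{x})=0$). So the sufficiency claim, and the postquantumness certified in Example~\ref{ex:pqs1}, only make sense once the no-signalling constraint $\sum_a\sigma_{a|x}=\sum_a\sigma_{a|x'}$ is built into condition 1, exactly as you propose.
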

 
\begin{proof}
 
The `if' part follows directly by construction: condition 2 provides a quantum model, which condition 1 guarantees recovers the correct assemblage elements. 

On the other hand, if $\As_\T$ is quantumly-realisable, then its quantum model can be leveraged to define 
\begin{align*}
\sigma_{\mathbf{a}|\mathbf{x}} &= \Tr_{\otimes_{j=1:n}\cH_{A_j}} \left\{\left(\otimes_{j=1:n}M^{(j)}_{a_j|x_j} \right) \otimes \id_{d_1 \ldots d_N} \, \rho \right\}\,,
\end{align*}
which satisfies both conditions 1 and 2. 

Finally, notice that when there is only one Alice, then the GHJW theorem makes condition 1 imply condition 2, which makes condition 1 sufficient for $\As_\T$ to be quantumly realisable.
\blk
\end{proof}

\

Notice that Thm.~\ref{thm:hasmarginal} implies that, for the case of a single Alice, certifying post-quantumness of an assemblage amounts to a single instance of a semidefinite program (SDP). We will leverage this in Section \ref{se:ex} to certify post-quantum assemblages. In general, for the case where we have more than one Alice, a certification protocol can be implemented  as we discuss in Sec.~\ref{se:steNPA}.

\subsection{Certifying postquantumness from shared correlations: subtleties}\label{se:pqnwpqs}

A natural question is whether one can allow the Bobs to make measurements in their shares of a system and then assess postquantumness using this extra information. For this, the $n+N$ parties would have to do various rounds of the experiment and then come together to compute the statistics $p(\mathbf{a}\mathbf{b}|\mathbf{x}\mathbf{y})$, where $y_k$ denotes the measurement choice of the $k$-th Bob, $b_k$ denotes the outcome he obtains, and $\mathbf{b}=b_1 \ldots b_N$, $\mathbf{y}=y_1 \ldots y_N$.

If one pursues this method, then one will assess the non-quantality (i.e.,non quantum-realisability) of the assemblage $\As_\T$  using not only the information contained in $\As_\T$ but also the new information brought in by $p(\mathbf{a}\mathbf{b}|\mathbf{x}\mathbf{y})$ (which cannot be inferred from $\As_\T$).  As we see in the next example, this extra information enables the correlation-based method to give a different assessment to the one that can be concluded from Thm.~\ref{thm:hasmarginal}. That is, sometimes $\As_T$ can admit of a quantum realisation, but then if you bring onto the table more information in the form of `correlations among all the parties' then the joint statistics can no longer be explained quantumly. This might sound unnatural\footnote{The reader who is more familiar with Bell experiments may be reassured by the following. Consider a Bell scenario with two parties (two Alices). If one has only access to the two marginals $\{p(a_1|x_1)\}$ and $\{p(a_2|x_2)\}$, one can always find a local (classical) distribution $\{p(a_1a_2|x_1x_2)\}$ compatible with them. However, if one has access to the full correlation $\{p(a_1a_2|x_1x_2)\}$ produced in the experiment, one may violate a Bell inequality.   }, but it actually is a feature of the theory-independent operational approach we pursue in this manuscript. Indeed a similar situation is encountered when exploring the classical vs.~quantum boundary, as we discuss in Ap.~\ref{app:compare}. 

\

\begin{ex}\label{ex:dif}\textbf{Quantum assemblages with postquantum Bell nonlocality.--} \quad \\
Consider the scenario $\T = (n=1,|x|=2,|a|=2,N=2,d_1=2,d_2=2)$. Consider the following qubit assemblage $\As_T$ generated in the toy-theory Witworld \cite{cavalcanti2022post}, by Alice performing the two dichotomic measurements $\{\ket{0}\bra{0},\ket{1}\bra{1}\}$  and $\{\ket{+}\bra{+},\ket{-}\bra{-}\}$ on her share of a tripartite system prepared in the state $O_W$ given by:
\begin{align}
O_W = \frac{1}{4 - 8\epsilon} (\Pi_{\text{UPB}} - \epsilon \, \id)\,.
\end{align}
$\Pi_{\text{UPB}}$ is the projector onto the three-qubit subspace spanned by  $\{\ket{000},\ket{1-+},\ket{+1-},\ket{-+1}\}$. In addition, $\epsilon = \min_{\ket{\alpha \beta \gamma}} \bra{\alpha \beta \gamma}\Pi_{\mathrm{UPB}} \ket{\alpha \beta \gamma} \sim 0.0814$, with $\ket{\alpha}$, $\ket{\beta}$, and $\ket{\gamma}$ arbitrary single qubit states and $\ket{\alpha \beta \gamma}:=\ket{\alpha}\otimes\ket{\beta}\otimes\ket{\gamma}$. 

\bigskip

The assemblage $\As_\T$ constructed in this way is  strongly-non-signalling   and   has the following elements\footnote{The assemblage is presented in the Matlab workspace ABB\_PQNL.mat in the repository of Ref.~\cite{MW}. } 
\begin{align} \nonumber
\sigma^{(1)}_{0|0}=
\begin{bmatrix}
\tfrac{2924286153215233}{9007199254740992} &
\tfrac{2689945358119939}{36028797018963968} \\
& \\
\tfrac{2689945358119939}{36028797018963968} & 
\tfrac{6317253896621053}{36028797018963968}
\end{bmatrix}
\,,&\quad 
\sigma^{(1)}_{1|0} =
\begin{bmatrix}
\tfrac{6317253896621053}{36028797018963968} &
-\tfrac{2689945358119939}{36028797018963968} \\
& \\
-\tfrac{2689945358119939}{36028797018963968} & 
\tfrac{2924286153215233}{9007199254740992}
\end{bmatrix}
\,,\\   \nonumber
\quad \\   \nonumber
\sigma^{(1)}_{0|1} =
\begin{bmatrix}
\tfrac{1579313474155263}{9007199254740992} &
-\tfrac{2689945358119939}{36028797018963968} \\
& \\
-\tfrac{2689945358119939}{36028797018963968} &   
\tfrac{2924286153215233}{9007199254740992}
\end{bmatrix}
\,,&\quad
\sigma^{(1)}_{1|1} =
\begin{bmatrix}
\tfrac{2924286153215233}{9007199254740992} &
\tfrac{2689945358119939}{36028797018963968} \\
& \\
\tfrac{2689945358119939}{36028797018963968} &
\tfrac{6317253896621053}{36028797018963968}
\end{bmatrix}
\,,\\  \nonumber
\quad \\   \nonumber
\sigma^{(2)}_{0|0} =
\begin{bmatrix}
\tfrac{2924286153215233}{9007199254740992} & 
-\tfrac{2689945358119939}{36028797018963968} \\
& \\
-\tfrac{2689945358119939}{36028797018963968} &
\tfrac{6317253896621053}{36028797018963968}
\end{bmatrix}
\,,&\quad
\sigma^{(2)}_{1|0} =
\begin{bmatrix}
\tfrac{6317253896621053}{36028797018963968} &
\tfrac{2689945358119939}{36028797018963968} \\
& \\
\tfrac{2689945358119939}{36028797018963968} &  
\tfrac{2924286153215233}{9007199254740992}
\end{bmatrix}
\,,\\ \nonumber
\quad \\   \nonumber
\sigma^{(2)}_{0|1} =
\begin{bmatrix}
\tfrac{2924286153215233}{9007199254740992} & 
-\tfrac{2689945358119939}{36028797018963968} \\
& \\
-\tfrac{2689945358119939}{36028797018963968} &
\tfrac{6317253896621053}{36028797018963968}
\end{bmatrix}
\,,&\quad
\sigma^{(2)}_{1|1} =
\begin{bmatrix}
\tfrac{1579313474155263}{9007199254740992} & 
\tfrac{2689945358119939}{36028797018963968} \\
& \\
\tfrac{2689945358119939}{36028797018963968} &
\tfrac{2924286153215233}{9007199254740992}
\end{bmatrix}\,.
\end{align}
If the two Bobs were each to measure his qubit system with the two dichotomic measurements $\{\ket{0}\bra{0},\ket{1}\bra{1}\}$  and $\{\ket{+}\bra{+},\ket{-}\bra{-}\}$, then the correlations $\{p(ab_1b_2|xy_1y_2)\}$ the parties obtain would  not admit of a quantum realisation \cite{acin2010unified}. 

\bigskip

If you now run the SDP from Thm.~\ref{thm:hasmarginal}, one however finds that the assemblage $\As_\T$ does admit of a quantum realisation. The two-qubit assemblage that recovers Bobs' states as marginals may be found in the corresponding Matlab workspace.  
\end{ex}

We hence see that the assessment of the  quantum-realisability (a.k.a.~\textit{quantality})  of $\As_\T$ based solely on the assemblage elements cannot attest to the quantality of the correlations that the parties would obtain if they processed the assemblage further as in a Bell experiment.

Example \ref{ex:dif} highlights a unique feature of this theory-independent operational approach to studying steering with many characterised parties. An interesting question is how to complement the information presented in $\As_\T$ so that an assemblage deemed quantumly-realisable cannot generate postquantum correlations. In a way, this information needs to encode the way in which the Bobs are correlated, which is information not present at the local level. In addition, this might require the specification of a global $N$-partite object that somehow supersedes the assemblage $\As_\T$ itself, but how to define this in a theory-independent way is an open question. If, however, one has a specific theory in mind, then one can take this `superseding object' as the starting point, and make it be directly the state of an $N$-partite system in that theory (which should locally look quantum).

\subsection{Postquantum steering without postquantum nonlocality}\label{se:pqswpqn}

Another natural question is whether there exists an assemblage $\As_\T$ that is not quantumly-realisable (as per Thm.~\ref{thm:hasmarginal}), but whose correlations $p(\mathbf{a}\mathbf{b}|\mathbf{x}\mathbf{y})$  may  admit of a quantum realisation. Such a question has been pursued in the literature for other types of steering scenarios \cite{sainz2015postquantum,sainz2018formalism,sainz2020bipartite},  which feature only one characterised party. There, one can indeed show that some postquantum assemblages will always only yield quantumly-realisable correlations when Bob is allowed to measure. 
When one has multiple characterised parties, though, such a strong statement cannot be made without making some extra assumption.  Hence, in this paper we show that for the EPR steering scenario with many Bobs, there exist postquantum assemblages with the following properties: 
there exists a toy theory where such postquantum assemblage can arise, and may do so in a way that the correlations $p(\mathbf{a}\mathbf{b}|\mathbf{x}\mathbf{y})$ observed (were the Bobs to measure) would admit of a quantum explanation (see Sec.~\ref{se:ex}, Example \ref{ex:ABBPTP}).

In this section, we discuss how to adapt the method of Ref.~\cite{sainz2018formalism} for showing that postquantum steering in scenarios with many Bobs does not  necessarily imply   postquantum nonlocality. In Sec.~\ref{se:ex} we leverage this method to find examples. 
For simplicity in the discussion we consider a scenario $\T = (n=1,|x|,|a|,N=2,d_1,d_2)$ -- i.e., one Alice and two Bobs. The techniques, however, generalise straightforwardly to scenarios with a larger number of Alices and/or Bobs.

\begin{prop}\label{prop:wit}\textbf{Assemblages  compatible  with  quantum Bell-type   nonlocality.--}\\
Consider a tripartite quantum system $\cH = \cH_A\otimes \cH_1 \otimes \cH_2$ shared by Alice, the first Bob, and the second Bob, respectively. Let $\Lambda^{(k)}[\cdot]$ be a Positive and Trace-Preserving (PTP) map on $\cH_k$, which may not be completely positive, for each $k=1,2$. Let $\{\{M_{a|x}\}_a\}_x$ be a collection of (complete) measurements in $\cH_A$, and $\rho$ a (normalised) density matrix in $\cH$. Using the toy theory Witworld\cite{cavalcanti2022post}, construct the  strongly-non-signalling    assemblage $\As = \{\left( \sigma^{1}_{a|x}, \sigma^{2}_{a|x}\right)\}_{a,x}$ with elements:
\begin{align}
\sigma^{k}_{a|x} &= \tr{ \left(M_{a|x} \otimes \id_1 \otimes \id_2 \right) (\id_A \otimes \Lambda^{(1)} \otimes \Lambda^{(2)})[\rho] }{\cH_A,\cH_{\neg k}}\,,
\end{align}
where $\id_k$ is the identity in $\cH_k$. 

Then, the correlations $p(a\mathbf{b}|x\mathbf{y})$  generated when the Bobs perform local measurements on their systems always admit of a quantum realisation. 
\end{prop}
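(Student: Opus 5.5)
The plan is to move the positive maps off the state and onto the Bobs' measurements, using the Heisenberg picture. Fix local measurements $\{N^{(k)}_{b_k|y_k}\}_{b_k}$ for each Bob $k=1,2$. In Witworld, the joint statistics are obtained by applying the product effect to the (possibly non-positive) operator $\tilde\rho := (\id_A\otimes\Lambda^{(1)}\otimes\Lambda^{(2)})[\rho]$. This gives
\begin{align*}
p(ab_1b_2|xy_1y_2) = \Tr\left\{ \left(M_{a|x}\otimes N^{(1)}_{b_1|y_1}\otimes N^{(2)}_{b_2|y_2}\right)\tilde\rho \right\}.
\end{align*}
These are precisely the statistics Witworld assigns. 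Its processes act locally, and its effects include product effects of local quantum effects. I would justify this from the construction of the assemblage in the statement.

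\textbf{Step 1: dual maps.} Let $\Lambda^{(k)*}$ denote the Hilbert--Schmidt adjoint of $\Lambda^{(k)}$, defined by $\Tr\{X\,\Lambda^{(k)}[Y]\}=\Tr\{\Lambda^{(k)*}[X]\,Y\}$. Because $\Lambda^{(k)}$ is positive, $\Lambda^{(k)*}$ is positive. Because $\Lambda^{(k)}$ is trace-preserving, $\Lambda^{(k)*}$ is unital. Since $\id_A\otimes\Lambda^{(1)}\otimes\Lambda^{(2)}$ is a product of maps, its adjoint factorises as $\id_A\otimes\Lambda^{(1)*}\otimes\Lambda^{(2)*}$. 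This gives
\begin{align*}
p(ab_1b_2|xy_1y_2) = \Tr\left\{ \left(M_{a|x}\otimes \Lambda^{(1)*}[N^{(1)}_{b_1|y_1}]\otimes \Lambda^{(2)*}[N^{(2)}_{b_2|y_2}]\right)\rho \right\}.
\end{align*}
The key point is that the adjoint is only ever applied to single-system operators. Complete positivity is therefore never needed.

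\textbf{Step 2: valid POVMs.} Set $\tilde N^{(k)}_{b_k|y_k}:=\Lambda^{(k)*}[N^{(k)}_{b_k|y_k}]$. By positivity of $\Lambda^{(k)*}$, each $\tilde N^{(k)}_{b_k|y_k}\geq 0$. By unitality, $\sum_{b_k}\tilde N^{(k)}_{b_k|y_k}=\Lambda^{(k)*}[\id_k]=\id_k$. Hence each $\{\tilde N^{(k)}_{b_k|y_k}\}_{b_k}$ is a legitimate quantum measurement on $\cH_k$.

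\textbf{Step 3: conclude.} The triple consisting of the genuine density matrix $\rho$, Alice's measurements $M_{a|x}$, and the POVMs $\tilde N^{(k)}$ is a tripartite quantum realisation of $p(a\mathbf{b}|x\mathbf{y})$. The same argument extends verbatim to more Alices and Bobs, and to product measurements within each party.

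\textbf{Main obstacle.} I expect the hardest part to be the bookkeeping in Step 1 rather than the algebra. One has to confirm that Witworld's prescription for the Bobs' local measurements on the output of the local PTP maps is exactly the trace formula above, with product effects. If Bobs could perform entangled measurements across both their systems, $\Lambda^{(1)*}\otimes\Lambda^{(2)*}$ applied to a non-product effect need not be positive, and the argument would fail. The restriction in the statement to local measurements by the Bobs is what makes it go through.
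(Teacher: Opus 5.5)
Your proposal is correct and follows essentially the same route as the paper: both push the local PTP maps onto the Bobs' measurements via the adjoint, $\tilde N^{(k)}_{b_k|y_k} = (\Lambda^{(k)})^\dagger[N^{(k)}_{b_k|y_k}]$, and read off a quantum realisation that uses the genuine state $\rho$. You are more explicit than the paper in two places: you verify that the dual of a positive trace-preserving map is positive and unital, so the $\tilde N^{(k)}$ are valid POVMs, and you flag that the argument relies on Witworld assigning product-effect statistics to $(\id_A\otimes\Lambda^{(1)}\otimes\Lambda^{(2)})[\rho]$, which the paper takes for granted.
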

 Notice that here we do not only have a specification of the assemblage $\As$, but also a realisation of this assemblage within the toy theory Witworld, which we leverage to make inferences on what the correlations among the Bobs are. It could happen that within a different theory the same assemblage can be prepared but in a way that the correlations $p(a\mathbf{b}|x\mathbf{y})$ do not admit of a quantum realisation.  
\begin{proof}
Let Alice and the two Bobs take the assemblage $\As$ to perform a Bell-type experiment. Denote by $\mathbf{M}^{(1)} = \{M^{(1)}_{b_1|y_1}\}_{b_1, x_1}$ and $\mathbf{M}^{(2)} = \{M^{(2)}_{b_2|y_2}\}_{b_2, x_2}$ the operators corresponding to the measurements performed by the first and second Bob, respectively. 

Notice that, locally, applying a PTP map $\Lambda^{(k)}[\cdot]$ followed by a measurement in $\mathbf{M}^{(k)}$ yields the same statistics as directly applying the well-defined measurement from $\mathbf{\tilde{M}}^{(k)} = \{\tilde{M}^{(k)}_{b_k|y_k}\}_{b_k, x_k}$ with elements 
$\tilde{M}^{(k)}_{b_k|y_k} = \left( \Lambda^{(k)}\right)^\dagger [{M}^{(k)}_{b_k|y_k}]$\,.
Hence, the statistic observed from performing $\mathbf{M}^{(1)}$ and $\mathbf{M}^{(2)} $ on $\As$ is the same as the statistics observed from performing $\mathbf{\tilde{M}}^{(1)}$ and $\mathbf{\tilde{M}}^{(2)} $ on the \textit{quantum assemblage} $\tilde{\As}$ with elements 
\begin{align*}
\tilde{\sigma}^{k}_{a|x} &= \tr{ \left(M_{a|x} \otimes \id_1 \otimes \id_2 \right) \rho }{\cH_A,\cH_{\neg k}}\,.
\end{align*}
Hence, within Witworld the correlations $p(a\mathbf{b}|x\mathbf{y})$ admit of a quantum realisation of the form
$p(a\mathbf{b}|x\mathbf{y}) = \tr{ \left(M_{a|x} \otimes \tilde{M}^{(1)}_{b_1|y_1} \otimes \tilde{M}^{(2)}_{b_2|y_2} \right) \rho }{}$. 
\end{proof}

This mathematical property of locally-applied PTP maps was leveraged in Ref.~\cite{sainz2018formalism} to find examples of multipartite assemblages (with only one Bob) whose postquantumness cannot be detected by looking at the correlations that may arise when Bob further measures his system.  In those cases, however, the claim holds in full generality, regardless of which theory underpins the experiment.  Here, a  related   opportunity arises: even though $p(a\mathbf{b}|x\mathbf{y})$ may admit of a quantum realisation  in some toy theory,  the fact that the operator $\tilde{\rho} = (\id_A \otimes \Lambda^{(1)} \otimes \Lambda^{(2)})[\rho]$ may be negative (i.e., not a valid quantum state) makes it possible that the corresponding assemblage $\As$ defies a quantum explanation. In Sec.~\ref{se:ex} we present examples of assemblages $\As$ constructed as in Prop.~\ref{prop:wit} whose postquantumness is certified via Thm.~\ref{thm:hasmarginal}.

\section{Relevant examples of postquantum steering}\label{se:ex}

In this section we present a variety of examples of postquantum steering. The numerics here and in the next section were carried out in Matlab \cite{MATLAB}, using the software CVX \cite{grant2013cvx}, the solver SDPT3 \cite{tutuncu1999sdpt3} and the toolbox QETLAB \cite{qetlab}; see Ap.~\ref{ap:A} and the repository in Ref.~\cite{MW}.

First let us present an example of a postquantum assemblage in steering scenario with only one Alice. This leverages the approach to postquantum steering certification presented in Sec.~\ref{se:ntc}, which makes this a non-trivial example of post-quantum steering -- that is, one that doesn't leverage Bell nonlocality  among the Alices  to make the assessment. 

\begin{ex}\label{ex:pqs1} \textbf{Postquantum steering example based on global states.--}\\
Consider the scenario $\T = \left(n=1,|\X|=2,|\A|=2,N=2,d_1=2,d_2=2 \right)$, and take  $\X_1 = \X_2 = \A^j_k = \{0,1\}$. Define the assemblage\footnote{The assemblage is presented in the Matlab workspace ABB\_1.mat.} $\As_\T = \{ \sigma^{(1)}_{a|x}, \sigma^{(2)}_{a|x})  \}_{a|x}$ as

\begin{align}
\sigma^{(1)}_{0|0} &= \begin{bmatrix}
\tfrac{86505229615495}{281474976710656} &
\tfrac{8644919822415911}{36028797018963968} + \tfrac{463283299018779}{18014398509481984}i \\
& \\
\tfrac{8644919822415911}{36028797018963968} - \tfrac{463283299018779}{18014398509481984}i &                                     
\tfrac{7512381172927199}{36028797018963968}
\end{bmatrix} \,,\\  \nonumber
\quad \\ \nonumber
\sigma^{(1)}_{1|0} &= \begin{bmatrix}
\tfrac{37723717191037}{140737488355328} &
\tfrac{5332123403232109}{2305843009213693952} + \tfrac{7972126598771139}{36028797018963968}i \\
& \\
\tfrac{5332123403232109}{2305843009213693952} - \tfrac{7972126598771139}{36028797018963968}i &                                          
\tfrac{243327339198373}{1125899906842624}
\end{bmatrix} \,, 
\end{align}
\begin{align} \nonumber
\sigma^{(1)}_{0|1} &= \begin{bmatrix}
\tfrac{8506857505773515}{36028797018963968}  &
\tfrac{3381940540593419}{36028797018963968} + \tfrac{1630332021704117}{72057594037927936}i \\
& \\
\tfrac{3381940540593419}{36028797018963968} - \tfrac{1630332021704117}{72057594037927936}i  &                                      
\tfrac{3751794212825959}{18014398509481984}
\end{bmatrix} \,,\\  \nonumber
\quad \\ \nonumber
\sigma^{(1)}_{1|1} &= \begin{bmatrix}
\tfrac{1527885435739415}{4503599627370496} &
\tfrac{2673146854998997}{18014398509481984} + \tfrac{4041763592978319}{18014398509481984} i \\
& \\
\tfrac{2673146854998997}{18014398509481984} - \tfrac{4041763592978319}{18014398509481984} i &                                        
\tfrac{487204225101451}{2251799813685248}
\end{bmatrix} \,,\\  \nonumber
\quad \\ 
\sigma^{(2)}_{0|0} &= \begin{bmatrix}
\tfrac{2038719144082355}{4503599627370496} &
\tfrac{7589461495857015}{72057594037927936} + \tfrac{6552116316195633}{1152921504606846976}i \\
& \\
\tfrac{7589461495857015}{72057594037927936} - \tfrac{6552116316195633}{1152921504606846976}i &
\tfrac{284412176381465}{4503599627370496}
\end{bmatrix} \,,\\  \nonumber
\quad \\ \nonumber
\sigma^{(2)}_{1|0} &= \begin{bmatrix}
\tfrac{1209315864020551}{4503599627370496} &
\tfrac{6624831370177975}{72057594037927936} + \tfrac{992983320980893}{4503599627370496}i \\
& \\
\tfrac{6624831370177975}{72057594037927936} - \tfrac{992983320980893}{4503599627370496}i &
\tfrac{971152442886125}{4503599627370496}
\end{bmatrix} \,, \\ \nonumber
\quad \\ \nonumber
\sigma^{(2)}_{0|1} &= \begin{bmatrix}
\tfrac{687889890777155}{2251799813685248} &
\tfrac{3855767902614183}{288230376151711744} + \tfrac{1339362793814727}{9007199254740992}i \\
& \\
\tfrac{3855767902614183}{288230376151711744} - \tfrac{1339362793814727}{9007199254740992}i &
\tfrac{5004207678990951}{36028797018963968} 
\end{bmatrix} \,,\\  \nonumber
\quad \\ \nonumber
\sigma^{(2)}_{1|1} &= \begin{bmatrix}
\tfrac{468063806637149}{1125899906842624} &
\tfrac{6625175445190723}{36028797018963968} + \tfrac{5582338054938701}{72057594037927936}i \\
& \\
\tfrac{6625175445190723}{36028797018963968} - \tfrac{5582338054938701}{72057594037927936}i &                                      
\tfrac{630038659393721}{4503599627370496}
\end{bmatrix} \,.
\end{align}

\bigskip 

This  weakly-non-signalling  assemblage $\As_\T$ is postquantum, as certified by the procedure of Thm.~\ref{thm:hasmarginal}: there is no two-qubit ensemble of (possibly subnormalised) states $\{\sigma_{a|x}\}_{a,x}$ such that constraints (1) and (2) are satisfied. 

\end{ex}

Another similar example\footnote{The assemblage is presented in the Matlab workspace ABB\_2.mat.} will be also discussed in the next section.

\bigskip

Next we present an example\footnote{The assemblage is presented in the Matlab workspace ABB\_PTP\_1.mat.} of an assemblage that is  both  strongly-non-signalling  and postquantum, but  where the correlations it produces in a traditional Bell scenario could accept a quantum explanation,  as per discussed in Sec.~\ref{se:pqswpqn}.

\begin{ex}\label{ex:ABBPTP} \textbf{Postquantum steering example with quantum nonlocality.--}\\
Consider a steering scenario $\T = \left(n=1,|\X|=2,|\A|=2,N=2,d_1=2,d_2=4 \right)$. Consider the assemblage $\As_\T = \left\{ \left( \sigma^{(1)}_{\mathbf{a}|\mathbf{x}} , \sigma^{(2)}_{\mathbf{a}|\mathbf{x}}\right)  \right\}_{\mathbf{a},\mathbf{x}}$ mathematically specified\footnote{This is not a protocol to physically implement the assembalge within quantum theory, since the map $\Lambda^{(2)}$ is not a completely-positive map. This is just a convenient mathematical description of $\As_\T$.  } as in Prop.~\ref{prop:wit} with the following: 
\begin{compactitem}
\item the Hilbert spaces $\cH_A$ and $\cH_1$ have dimension $2$, and $\cH_2$ has dimension 4. 
\item The state $\rho$ in $\cH_A \otimes \cH_1 \otimes \cH_2$ is defined as $\rho = \ket{\psi}\bra{\psi}$, with $\ket{\psi}$ given by
\begin{align*}
\ket{\psi} = 
[\tfrac{90}{20389}  +  \tfrac{374}{3209}i,\, 
    -\tfrac{228}{1439}   -  \tfrac{116}{987}i  ,\,
     \tfrac{933}{4015}   +  \tfrac{241}{10931}i ,\,
    -\tfrac{659}{1299}   -  \tfrac{983}{8054}i ,    \\
     \tfrac{190}{9687}   -  \tfrac{289}{1353}i ,\,
    -\tfrac{352}{8837}   +   \tfrac{62}{621}i  ,\,
    -\tfrac{356}{2211}   -  \tfrac{269}{1254}i ,\,
     \tfrac{171}{3658}   +  \tfrac{358}{2201}i ,     \\
    -\tfrac{280}{1671}   -  \tfrac{108}{2369}i ,\,
    -\tfrac{221}{2374}   +  \tfrac{121}{1803}i ,\,
     \tfrac{457}{2027}   -  \tfrac{415}{1774}i ,\,
    -\tfrac{392}{1353}   -  \tfrac{338}{2799}i ,    \\
     \tfrac{710}{7719}   -  \tfrac{226}{1985}i ,\,
     \tfrac{214}{2159}   -  \tfrac{211}{1007}i ,\,
    -\tfrac{297}{3218}   +  \tfrac{259}{774}i  ,\,
     \tfrac{417}{3082}   -   \tfrac{61}{1741}i ]^\dagger\,.
\end{align*}
\item The PTP map $\Lambda^{(1)}[\cdot]$ is taken to be just the identity. The PTP map $\Lambda^{(2)}[\cdot]$ is given by
\begin{align}
\Lambda^{(2)}[\rho] = \tfrac{1}{2} \left\{ \tr{\rho}{} \id - \rho - U \rho^{\mathrm{T}} U^\dagger \right\}\,,
\end{align}
where ${}^{\mathrm{T}}$ denotes transposition, and $U = X \otimes Y$ is an antisymmetric unitary with $X$ and $Z$ the Pauli operators. 
\item Alice's measurement operators $\mathbf{M}$ are given by $\{M_{0|0} = \ket{+}\bra{+}, M_{1|0} = \ket{-}\bra{-}\}$ and $\{M_{0|1} = \ket{0}\bra{0}, M_{1|1}=\ket{1}\bra{1}\}$. 
\end{compactitem}
One can check that this assemblage is well defined in $\T$, and by construction  it is compatible with quantumly-realisable correlations when underpinned by the toy theory Witworld  (see Sec.~\ref{se:pqswpqn}). This assemblage does not admit of a quantum realisation, as certified by Thm.~\ref{thm:hasmarginal}, since there is no ensemble of (possibly subnormalised) states $\{\sigma_{a|x}\}_{a,x}$ such that constraints (1) and (2) are satisfied. Finally, the assemblage  is strongly-non-signalling  since it may be realised as a common-cause process in Witworld.  
\end{ex}

Example \ref{ex:ABBPTP} is crucial for two reasons. On the one hand, it shows that postquantum steering in this type of scenarios is a phenomenon independent of postquantum nonlocality, which makes the phenomenon interesting in its own right. On the other hand, the assemblage in this example can arise within the toy theory Witworld, which makes postquantum steering not only a mathematical curiosity but a phenomenon featured by compositional theories beyond quantum theory, hence making its study fundamentally relevant. Another similar example\footnote{The assemblage is presented in the Matlab workspace ABB\_PTP\_2.mat.} will be also presented in the next section. 

\bigskip

Finally, one could consider the steering scenario with two Alices and two Bobs: $\T = \left(n=2,|\X|=2,|\A|=2,N=2,d_1=2,d_2=4 \right)$. 
We present now examples\footnote{The assemblages are presented in the Matlab workspaces AABB\_PTP\_1.mat and AABB\_PTP\_2.mat.} of an assemblage in such scenario that is  both  strongly-non-signalling  and  postquantum, but  whose correlations in a traditional Bell scenario admit of a quantum explanation when underpinned by the toy theory Witworld.

\begin{ex}\label{ex:AABBPTP} \textbf{Postquantum steering example with quantum nonlocality with two Alices.--}\\
Consider the steering scenario $\T = \left(n=2,|\X|=2,|\A|=2,N=2,d_1=2,d_2=4 \right)$, with two Alices and two Bobs. Construct an assemblage using the same procedure as in Example \ref{ex:ABBPTP}, but by sampling a four-partite quantum state of the system $\cH_A \otimes \cH_A \otimes \cH_1 \otimes \cH_2$ instead. We apply the PTP map $\Lambda^{(2)}$ on the quantum system $\cH_2$, and take the first Alice to apply the measurements described in Example \ref{ex:ABBPTP}. For the second Alice, we define her measurements as the projection onto the eigenstates of $X+Z$ for $x_2=0$, and onto the eigenstates of $X-Z$ for $x_2=1$.

The assemblages defined with this procedure are well defined, and by construction  may yield quantum correlations when underpinned by the theory Witworld  (see Sec.~\ref{se:pqswpqn}). Here we present two numerical examples\footnotemark[10] of assemblages generated by this procedure and that do not admit of a quantum realisation as certified by Thm.~\ref{thm:hasmarginal}.  Finally, the assemblages  are strongly-non-signalling   since they may be realised as common-cause processes in Witworld.   
\end{ex}

\section{Robustness of postquantum steering }

A natural question pertains to how robust a postquantum assemblage is; namely, how much noise can the assemblage tolerate while still defying a quantum explanation. In this section we discuss how to compute the action of noise on an assemblage, and see how two types of noise can affect some postquantum assemblages that we have discovered. 

\bigskip

More generally, the situation is that we have two assemblages $\As_\T^1$ and $\As_\T^2$, and we want to know how to represent the elements of the assemblage $\As_\T^r$ that results from mixing them with weights $r$ and $1-r$ respectively. For this it is convenient to represent the assemblages as per Eq.~\eqref{eq:asssub}. With this perspective one can hence write
\begin{align}
\left(\sigma^k_{\mathbf{a}|\mathbf{x}}\right)^r = r \, \left(\sigma^k_{\mathbf{a}|\mathbf{x}}\right)^1 + (1-r) \, \left(\sigma^k_{\mathbf{a}|\mathbf{x}}\right)^2 \,,
\end{align} 
for all characterised parties $k=1,\ldots,N$. Intuition behind this can be drawn by thinking of a quantum example where one first computes the mixture of the global states $\sigma^{1\ldots N}_{\mathbf{a}|\mathbf{x}}$ and then takes the marginals for each Bob. 

\bigskip

Here we will explore two different types of noise: white noise and `entangled' noise. White noise is generated by the Alices flipping each an unbiased coin in her lab, while the $n+N$ parties share a maximally mixed quantum state. For a steering scenario $\T$ this is formalised as
\begin{align}
\As_T^\mathrm{w} = \left\{ \left( p(\mathbf{a}|\mathbf{x}) \, \frac{\id_{d_1}}{d_1} \, \ldots, p(\mathbf{a}|\mathbf{x}) \, \frac{\id_{d_N}}{d_N}     \right) \right\} \,,
\end{align}
where\footnote{Here for simplicity in the presentation we have considered the case where for each Alice, all her measurements have the same number of outcomes (which can differ from Alice to Alice). A similar object can be formalised in full generality using more contrived notation.}  $p(\mathbf{a}|\mathbf{x}) = \prod_{k=1}^n \frac{1}{|\A^k| }$. 

For the case of `entangled' noise, we mean a quantum assemblage that is generated by the Alices performing local measurements on a $n+N$-partite system prepared in an entangled state. Different choices of states give rise to different types of noise. For examples where both Bobs have qubit systems, and there is only one Alice, it is natural to consider three-qubit systems prepared either in the GHZ state $\ket{\mathrm{GHZ}}=\frac{\ket{000}+\ket{111}}{\sqrt{2}}$ or in the W state $\ket{\mathrm{W}} = \frac{\ket{001}+\ket{010}+\ket{100}}{\sqrt{3}}$. We will take Alice to perform two dichotomic measurements in her qubit, one along the $Z$ basis ($x=0$) and one along the $X$ basis ($x=1$). The quantum assemblages that correspond to these two cases are given by: 
\begin{align}
\As_\T^{\mathrm{GHZ}} \,: \quad 
 \sigma^{(k)}_{0|0} = 
\begin{bmatrix}
    \tfrac{1}{2}    &     0 \\
         0    &     0
\end{bmatrix} 
\,,\quad 
\sigma^{(k)}_{1|0} = 
\begin{bmatrix}
         0   &      0 \\
         0   &  \tfrac{1}{2}
\end{bmatrix} 
\,,\quad
 \sigma^{(k)}_{0|1} = 
\begin{bmatrix}
    \tfrac{1}{4}  &       0 \\
         0  &  \tfrac{1}{4}
\end{bmatrix} 
\,,\quad 
\sigma^{(k)}_{1|1} = 
\begin{bmatrix}
    \tfrac{1}{4}   &      0 \\
         0   &  \tfrac{1}{4}
\end{bmatrix} \,, \quad k=1,2\,,
\end{align}
and
\begin{align}
\As_\T^{\mathrm{W}} \,: \quad 
 \sigma^{(k)}_{0|0} = 
\begin{bmatrix}
   \tfrac{1}{3} & 0 \\
   0 & \tfrac{1}{3}
\end{bmatrix} 
\,,\quad 
\sigma^{(k)}_{1|0} = 
\begin{bmatrix}
  \tfrac{1}{3} & 0 \\
  0 & 0 
\end{bmatrix} 
\,,\quad
 \sigma^{(k)}_{0|1} = 
\begin{bmatrix}
   \tfrac{1}{3} & \tfrac{1}{6} \\
   \tfrac{1}{6} & \tfrac{1}{6}
\end{bmatrix} 
\,,\quad 
\sigma^{(k)}_{1|1} = 
\begin{bmatrix}
   \tfrac{1}{3} & -\tfrac{1}{6} \\
   -\tfrac{1}{6} & \tfrac{1}{6}
\end{bmatrix} \,, \quad k=1,2\,.
\end{align}

\bigskip

Given an assemblage $\As_\T$, its robustness to noise can then be computed as the minimum amount $r$ of noise that needs to be added such that the new assemblage admits of a quantum realisation. Mathematically, this reads:
\begin{align}
\text{min} &\quad r \\ \nonumber
\text{st} &\quad r \, \left(\sigma^k_{\mathbf{a}|\mathbf{x}}\right)^\text{noise} + (1-r) \, \left(\sigma^k_{\mathbf{a}|\mathbf{x}}\right) \quad \forall\, k,\mathbf{a},\mathbf{x}\,,\quad \text{and}\\
&\quad 
\left\{ \left( \left(\sigma^{(1)}_{\mathbf{a}|\mathbf{x}}\right)^\text{noise} ,\ldots, \left(\sigma^{(N)}_{\mathbf{a}|\mathbf{x}}\right)^\text{noise} \right)  \right\}_{\mathbf{a},\mathbf{x}}\quad \text{is quantumly realisable.}
\end{align}
We will now compute the robustness to noise for the examples mentioned in the previous section, and for different types of noise. 

\bigskip

Let us begin with the scenario $\T = \left(n=1,|\X|=2,|\A|=2,N=2,d_1=2,d_2=2 \right)$. 

First take the assemblage given by Example \ref{ex:pqs1}. Here, the robustness to white noise is $r^\mathrm{w} \simeq  0.0147$, the robustness to GHZ noise is $r^\mathrm{GHZ} \simeq  0.0167$, and the robustness to W noise is $r^\mathrm{W} \simeq  0.0139$. These robustness feature $r^\mathrm{W} < r^\mathrm{w} < r^\mathrm{GHZ}$. We see then that we need more than 1\% noise to have the postquantum assemblage admit of a quantum realisation. 

Now consider the example from the Matlab workspace \text{ABB\_2.mat}. This assemblage features a robustness to white noise of $r^\mathrm{w} \simeq  0.0014$, a robustness to GHZ noise of $r^\mathrm{GHZ} \simeq  0.0012$, and a robustness to W noise of $r^\mathrm{W} \simeq  0.0020$. These robustness feature $r^\mathrm{GHZ} < r^\mathrm{w} < r^\mathrm{W}$, which is qualitatively different from the relation in the previous example. Here, however, than much less noise is already sufficient to break the postquantumness of the assemblage. 

 \bigskip

Let us continue with the scenario $\T = \left(n=1,|\X|=2,|\A|=2,N=2,d_1=2,d_2=2 \right)$, where given the dimension of the Hilbert space of the second Bob we will focus on robustness to white noise only. The assemblage in Example \ref{ex:ABBPTP} yields $r^\mathrm{w} \simeq  0.0017$. The assemblage presented in  the Matlab workspace \text{ABB\_PTP\_2.mat}, which was obtained as in Example \ref{ex:ABBPTP} but for a different state $\ket{\psi}$, yields  $r^\mathrm{w} \simeq  0.0014$. We see that in these examples, then, a little noise is already sufficient to break the postquantumness of the assemblage. 

 \bigskip
  
Finally, let us explore the robustness to noise of the postquantum assembalges in scenarios with two Alices and two Bobs described in Example \ref{ex:AABBPTP}. The assemblage from the Matlab workspace AABB\_PTP\_1.mat has a robustness to white noise $r^\mathrm{w} \simeq  0.0085$, whereas the assemblage from the Matlab workspace AABB\_PTP\_2.mat yields $r^\mathrm{w} \simeq  0.00604$. We see that these instances of postquantumness can resist only a small white noise, of the order of $1\%$. 

\section{Outer approximation of the set of quantumly-realisable assemblages}\label{se:steNPA}

In this section we discuss the natural question of how to approximate `from the outside' the set of assemblages that admit of a quantum realisation. This question has been explored for correlations in Bell scenarios \cite{navascues2007bounding,navascues2008convergent} and for assemblages in steering scenarios with only one characterised party \cite{johnston2016extended,hoban2025hierarchy}. 

By leveraging Thm.~\ref{thm:hasmarginal}, we can propose a hierarchy of sets of assemblages, each of which defined as a semidefinite program.  

\begin{defn}\label{defn:qn}[\textbf{Assemblages in the set} $\mathbf{\cQ_\T^k}$] \quad \\
Consider a steering scenario $\T$, and specify an assemblage $\As_\T$ by the parameters
\begin{align}
\As_\T = \left\{ \left( \sigma^{(1)}_{\mathbf{a}|\mathbf{x}} ,\ldots, \sigma^{(N)}_{\mathbf{a}|\mathbf{x}}\right)  \right\}_{\mathbf{a},\mathbf{x}}\,.
\end{align}

$\As_\T$ belongs to $\mathbf{\cQ_\T^k}$ if and only if the following conditions hold: 
\begin{compactenum}
\item There exists an assemblage of $N$-partite  (possibly subnormalised)  states $\sigma_{\mathbf{a}|\mathbf{x}}$ in $\cH_1 \otimes \ldots \otimes \cH_N$ (where $\cH_\ell$ denotes the Hilbert space where $\sigma^{(\ell)}_{\mathbf{a}|\mathbf{x}}$ lives), such that
\begin{align}\label{eq:hasmarginal3}
\tr{\sigma_{\mathbf{a}|\mathbf{x}}}{\otimes_{j \neq \ell} \cH_j} = \sigma^{(\ell)}_{\mathbf{a}|\mathbf{x}} \quad \forall \, \ell, \mathbf{a}, \mathbf{x}\,, 
\end{align}
\item The $n+1$-partite assemblage $\As = \{\sigma_{\mathbf{a}|\mathbf{x}}\}$ where the Bobs are thought of as a single party belongs to the $k$-th level of the Johnston-Mittal-Russo-Watrous hierarchy\footnote{Alternatively, the $k$-th level of the hierarchy of Ref.~\cite{hoban2025hierarchy} when $y$ takes values in the singleton set. } \cite{johnston2016extended}. 
\end{compactenum}
\end{defn}
By definition, then, $\mathbf{\cQ_\T^{k+1}} \subseteq \mathbf{\cQ_\T^k}$, and $\bigcap_{k=1:\infty}\mathbf{\cQ_\T^k} = \mathbf{\cQ_\T}$. 

\

Finally, notice that Def.~\ref{defn:qn} opens the door to specifying so-called `almost-quantum assemblages' \cite{hoban2018channel,rossi2022characterising} in these EPR scenarios, by focusing on the usually called `1+AB' level of the hierarchy. The formalisation of this idea is left for future work. 

\section{Conclusions}

This paper studies postquantum steering in scenarios with multiple characterised parties (Bobs). Our work brings in the missing piece of the puzzle in the operational study of postquantum steering, since all other works focus on the specific case of one single quantum Bob. 
The first and crucial step we take is to identify the mathematical object that represents an assemblage from an operational (and arguably theory-independent) viewpoint in these scenarios. What we denote as an assemblage here, hence, differs from what has been studied in the literature on quantum vs.~classical steering with multiple characterised parties. In particular, it was necessary in this work to articulate  in which relevant ways an assemblage may be non-signalling.   

Here we focus on the scope and properties of postquantum steering, the boundary between quantum and postquantum, and the relation between postquantum steering and postquantum Bell nonlocality. We find that postquantum steering is formally possible, and present several examples. Some of these examples may emerge within the foil theory Witworld \cite{cavalcanti2022post}, which renders the phenomenon fundamentally relevant and not a mere mathematical curiosity. To certify postquantum steering we designed an algorithm which, for the case of a single uncharacterised party (Alice), also serves as a certificate of quantumness. That is, when we have only one Alice, the algorithm singles out the boundary of the quantum set of assemblages. 
In addition, for the case of two or more Alices, we provide a hierarchy of semidefinite programs, where each level gives an outer-approximation to the set of quantumly-realisable assemblages.
The robustness of our examples to different types of noise was analysed; we found that these were not very large (of the order of 0.1\%-1\%) from the viewpoint of physical implementations (which are, however, not relevant for the scope of this manuscript), but nonetheless are substantial enough to show our examples are not mere numerical instabilities. 

The connection with Bell nonlocality, however, is more subtle than in the case of the previously-studied scenarios. One similar trait all these scenarios prove to share (ours included) is that there exist postquantum assembalges that  are compatible with quantumly-realisable Bell correlations,  for the Bell scenario where the Bobs perform measurements on their quantum systems as well. This makes postquantum steering in these scenarios an interesting and relevant phenomenon on its own right, which is not a mere consequence of Bell postquantumness. Steering scenarios with multiple Bobs, however, have a peculiar trait that other scenarios can't feature: here one can have quantumly-realisable assemblages that give rise to postquantum Bell correlations when the Bobs measure their quantum systems. A similar trait appears when comparing quantum and classical steering: some assemblages admit of a classical model but may give rise to correlations that violate Bell inequalities (see Ap.~\ref{app:compare}). 
 Hence, we see that our operational theory-independent approach to steering makes the phenomenon incomparable to that of Bell nonlocality.  
This fact, which is highly counter-intuitive, stems from the fact that an assemblage in our scenario has no information on how the Bobs are correlated, hence one cannot infer what will happen if they were to measure. 
 One way to make this inference is to promote the steering experiment to a Bell one, run the statistics again, and compute the correlations $p(\mathbf{a}\mathbf{b}|\mathbf{x}\mathbf{y})$ from scratch (which, if we have a quantum experiment to begin with will not yield to postquantumness). Another way is to find realisations of the assemblage within other foil theories (such as in Example \ref{ex:dif} where the assemblage could arise not only from a multipartite quantum system but also from a multipartite Witworld system prepared in an entanglement witness), and see whether the statistics $p(\mathbf{a}\mathbf{b}|\mathbf{x}\mathbf{y})$ may defy a quantum explanation. This approach, however, is no longer theory-independent, and relies on the availability of suitable alternative foil theories. Similarly, for the case of classical vs.~quantum assemblages, one can equip the scenario with a quantum theoretical description (this is what is usually done), and hence all information on how the Bobs correlate may be available (though the phenomenon is no longer studied in a theory-independent way). 

Looking forward, our paper has opened the door to the study of steering in a different type of scenarios, from an operational and theory-independent viewpoint. This is the first time (we know of) that this object has been formalised, and hence there is a variety of questions one could pursue. 
On the one hand, one could perform a deeper study of classical vs.~nonclassical steering, beyond what we discussed in this manuscript. One could also explore definitions for other types of families of assemblages, and certification algorithms. On the other hand, one could explore the resourcefulness of this type of quantum and postquantum steering for communication or information processing tasks, including the development of resource theories (equipped with conversion algorithms or resource monotones) to formally quantify steering (quantum or postquantum) as a resource.

\section*{Acknowledgments}

I'm grateful for the insightful conversations with Andreas Winter about non-signaling conditions.
I thank John H.~Selby, David Schmid, and Vinicius P.~Rossi for useful discussions at the initial stages of the project. 
I also thank John H.~Selby and Matty J.~Hoban for feedback on an earlier version of this manuscript. 
I am grateful to Adriana Micelli, Daniel Sainz, Linda Selby, and Philip Selby, who provided the babysitting help I needed to make progress on this project. I am also grateful to Luca Sainz Selby for behaving very well with the grandparents, making all this possible.
This work is carried out under IRA Programme, project no.~FENG.02.01-IP.05-0006/23, financed by the FENG program 2021-2027, Priority FENG.02, Measure FENG.02.01., with the support of the FNP.

\bibliographystyle{quantum}
\bibliography{pqsmbrefs}

\begin{thebibliography}{10}

\bibitem{schrodinger1935discussion}
Erwin Schr{\"o}dinger.
\newblock ``Discussion of probability relations between separated systems''.
\newblock
  \href{https://dx.doi.org/https://doi.org/10.1017/S0305004100013554}{Mathematical
  Proceedings of the Cambridge Philosophical Society {\bf 31},
  555--563}~(1935).

\bibitem{schrodinger36}
E.~Schrödinger.
\newblock ``Probability relations between separated systems''.
\newblock \href{https://dx.doi.org/10.1017/S0305004100019137}{Mathematical
  Proceedings of the Cambridge Philosophical Society {\bf 32},
  446–452}~(1936).

\bibitem{wiseman2007steering}
Howard~M Wiseman, Steve~James Jones, and Andrew~C Doherty.
\newblock ``Steering, entanglement, nonlocality, and the
  {E}instein-{P}odolsky-{R}osen paradox''.
\newblock
  \href{https://dx.doi.org/https://doi.org/10.1103/PhysRevLett.98.140402}{Physical
  review letters {\bf 98}, 140402}~(2007).

\bibitem{branciard2012one}
Cyril Branciard, Eric~G Cavalcanti, Stephen~P Walborn, Valerio Scarani, and
  Howard~M Wiseman.
\newblock ``One-sided device-independent quantum key distribution: {S}ecurity,
  feasibility, and the connection with steering''.
\newblock
  \href{https://dx.doi.org/https://doi.org/10.1103/PhysRevA.85.010301}{Physical
  Review A {\bf 85}, 010301}~(2012).

\bibitem{supic2016}
Ivan {\v{S}}upi{\'c} and Matty~J Hoban.
\newblock ``Self-testing through {EPR}-steering''.
\newblock
  \href{https://dx.doi.org/https://doi.org/10.1088/1367-2630/18/7/075006}{New
  Journal of Physics {\bf 18}, 075006}~(2016).

\bibitem{gheorghiu2017}
Alexandru Gheorghiu, Petros Wallden, and Elham Kashefi.
\newblock ``Rigidity of quantum steering and one-sided device-independent
  verifiable quantum computation''.
\newblock
  \href{https://dx.doi.org/https://doi.org/10.1088/1367-2630/aa5cff}{New
  Journal of Physics {\bf 19}, 023043}~(2017).

\bibitem{law2014quantum}
Yun~Zhi Law, Jean-Daniel Bancal, Valerio Scarani, et~al.
\newblock ``Quantum randomness extraction for various levels of
  characterization of the devices''.
\newblock
  \href{https://dx.doi.org/https://doi.org/10.1088/1751-8113/47/42/424028}{Journal
  of Physics A: Mathematical and Theoretical {\bf 47}, 424028}~(2014).

\bibitem{passaro2015optimal}
Elsa Passaro, Daniel Cavalcanti, Paul Skrzypczyk, and Antonio Ac{\'\i}n.
\newblock ``Optimal randomness certification in the quantum steering and
  prepare-and-measure scenarios''.
\newblock
  \href{https://dx.doi.org/https://doi.org/10.1088/1367-2630/17/11/113010}{New
  Journal of Physics {\bf 17}, 113010}~(2015).

\bibitem{quintino2014joint}
Marco~T\'ulio Quintino, Tam\'as V\'ertesi, and Nicolas Brunner.
\newblock ``Joint measurability, einstein-podolsky-rosen steering, and bell
  nonlocality''.
\newblock \href{https://dx.doi.org/10.1103/PhysRevLett.113.160402}{Phys. Rev.
  Lett. {\bf 113}, 160402}~(2014).

\bibitem{uola2014joint}
Roope Uola, Tobias Moroder, and Otfried G\"uhne.
\newblock ``Joint measurability of generalized measurements implies
  classicality''.
\newblock \href{https://dx.doi.org/10.1103/PhysRevLett.113.160403}{Phys. Rev.
  Lett. {\bf 113}, 160403}~(2014).

\bibitem{cavalcanti2016quantitative}
D.~Cavalcanti and P.~Skrzypczyk.
\newblock ``Quantitative relations between measurement incompatibility, quantum
  steering, and nonlocality''.
\newblock \href{https://dx.doi.org/10.1103/PhysRevA.93.052112}{Phys. Rev. A
  {\bf 93}, 052112}~(2016).

\bibitem{popescu1994quantum}
Sandu Popescu and Daniel Rohrlich.
\newblock ``Quantum nonlocality as an axiom''.
\newblock
  \href{https://dx.doi.org/https://doi.org/10.1007/BF02058098}{Foundations of
  Physics {\bf 24}, 379--385}~(1994).

\bibitem{brunner2014bell}
Nicolas Brunner, Daniel Cavalcanti, Stefano Pironio, Valerio Scarani, and
  Stephanie Wehner.
\newblock ``Bell nonlocality''.
\newblock
  \href{https://dx.doi.org/https://doi.org/10.1103/RevModPhys.86.419}{Reviews
  of Modern Physics {\bf 86}, 419}~(2014).

\bibitem{sainz2015postquantum}
Ana~Bel{\'e}n Sainz, Nicolas Brunner, Daniel Cavalcanti, Paul Skrzypczyk, and
  Tam{\'a}s V{\'e}rtesi.
\newblock ``Postquantum steering''.
\newblock
  \href{https://dx.doi.org/https://doi.org/10.1103/PhysRevLett.115.190403}{Physical
  review letters {\bf 115}, 190403}~(2015).

\bibitem{gisin1989stochastic}
Nicolas Gisin.
\newblock ``Stochastic quantum dynamics and relativity''.
\newblock
  \href{https://dx.doi.org/http://doi.org/10.5169/seals-116034}{Helvetica
  Physica Acta {\bf 62}, 363--371}~(1989).

\bibitem{hughston1993complete}
Lane~P Hughston, Richard Jozsa, and William~K Wootters.
\newblock ``A complete classification of quantum ensembles having a given
  density matrix''.
\newblock
  \href{https://dx.doi.org/https://doi.org/10.1016/0375-9601(93)90880-9}{Physics
  Letters A {\bf 183}, 14--18}~(1993).

\bibitem{sainz2020bipartite}
Ana~Bel{\'e}n Sainz, Matty~J Hoban, Paul Skrzypczyk, and Leandro Aolita.
\newblock ``Bipartite postquantum steering in generalized scenarios''.
\newblock
  \href{https://dx.doi.org/https://doi.org/10.1103/PhysRevLett.125.050404}{Physical
  Review Letters {\bf 125}, 050404}~(2020).

\bibitem{sainz2018formalism}
Ana~Bel{\'e}n Sainz, Leandro Aolita, Marco Piani, Matty~J Hoban, and Paul
  Skrzypczyk.
\newblock ``A formalism for steering with local quantum measurements''.
\newblock
  \href{https://dx.doi.org/https://doi.org/10.1088/1367-2630/aad8df}{New
  Journal of Physics {\bf 20}, 083040}~(2018).

\bibitem{cavalcanti2022post}
Paulo~J Cavalcanti, John~H Selby, Jamie Sikora, Thomas~D Galley, and
  Ana~Bel{\'e}n Sainz.
\newblock ``Post-quantum steering is a stronger-than-quantum resource for
  information processing''.
\newblock
  \href{https://dx.doi.org/https://doi.org/10.1038/s41534-022-00574-8}{npj
  Quantum Information {\bf 8}, 1--10}~(2022).

\bibitem{sainz2025activation}
Ana~Bel{\'e}n Sainz, Paul Skrzypczyk, and Matty~J Hoban.
\newblock ``Activation of post-quantum steering''.
\newblock \href{https://dx.doi.org/10.1088/1367-2630/ae21fe}{New Journal of
  Physics {\bf 27}, 124508}~(2025).

\bibitem{zjawin2024activation}
Beata Zjawin, Matty~J Hoban, Paul Skrzypczyk, and Ana~Bel{\'e}n Sainz.
\newblock ``Activation of postquantumness in bipartite generalized
  einstein-podolsky-rosen scenarios''.
\newblock
  \href{https://dx.doi.org/https://doi.org/10.1103/PhysRevA.110.042212}{Physical
  Review A {\bf 110}, 042212}~(2024).

\bibitem{hoban2025hierarchy}
Matty~J. Hoban, Tom Drescher, and Ana~Bel{\'{e}}n Sainz.
\newblock ``A hierarchy of semidefinite programs for generalised
  {E}instein-{P}odolsky-{R}osen scenarios''.
\newblock \href{https://dx.doi.org/10.22331/q-2025-01-14-1591}{{Quantum} {\bf
  9}, 1591}~(2025).

\bibitem{cavalcanti2024every}
Paulo~J Cavalcanti, John~H Selby, and Ana~Bel{\'e}n Sainz.
\newblock ``Every nonsignaling channel is common-cause realizable''.
\newblock
  \href{https://dx.doi.org/https://doi.org/10.1103/PhysRevA.109.042211}{Physical
  Review A {\bf 109}, 042211}~(2024).

\bibitem{cavalcanti2015detection}
Daniel Cavalcanti, Paul Skrzypczyk, GH~Aguilar, RV~Nery, PH~Souto Ribeiro, and
  SP~Walborn.
\newblock ``Detection of entanglement in asymmetric quantum networks and
  multipartite quantum steering''.
\newblock \href{https://dx.doi.org/https://doi.org/10.1038/ncomms8941}{Nature
  communications {\bf 6}, 1--6}~(2015).

\bibitem{uola2020quantum}
Roope Uola, Ana~CS Costa, H~Chau Nguyen, and Otfried G{\"u}hne.
\newblock ``Quantum steering''.
\newblock
  \href{https://dx.doi.org/https://doi.org/10.1103/RevModPhys.92.015001}{Reviews
  of Modern Physics {\bf 92}, 015001}~(2020).

\bibitem{jenvcova2022assemblages}
Anna Jen{\v{c}}ov{\'a}.
\newblock ``Assemblages and steering in general probabilistic theories''.
\newblock
  \href{https://dx.doi.org/https://doi.org/10.1088/1751-8121/ac97ce}{Journal of
  Physics A: Mathematical and Theoretical {\bf 55}, 434001}~(2022).

\bibitem{MW}
\url{https://doi.org/10.5281/zenodo.19468921}.

\bibitem{acin2010unified}
A.~Ac\'{\i}n, R.~Augusiak, D.~Cavalcanti, C.~Hadley, J.~K. Korbicz,
  M.~Lewenstein, Ll. Masanes, and M.~Piani.
\newblock ``Unified framework for correlations in terms of local quantum
  observables''.
\newblock \href{https://dx.doi.org/10.1103/PhysRevLett.104.140404}{Phys. Rev.
  Lett. {\bf 104}, 140404}~(2010).

\bibitem{MATLAB}
``Matlab {R2022b} \& {R2025b}''.
\newblock  url:~\url{https://www.mathworks.com/}.

\bibitem{grant2013cvx}
Michael Grant and Stephen Boyd.
\newblock ``{CVX}: {MATLAB} software for disciplined convex programming''.
\newblock  url:~\url{http://cvxr.com/cvx}.

\bibitem{tutuncu1999sdpt3}
K.~C. Toh, M.~J. Todd, and R.~H. Tütüncü.
\newblock ``{SDPT3} — a matlab software package for semidefinite programming,
  version 1.3''.
\newblock
  \href{https://dx.doi.org/https://doi.org/10.1080/10556789908805762}{Optimization
  Methods and Software {\bf 11}, 545--581}~(1999).

\bibitem{qetlab}
Nathaniel Johnston.
\newblock ``{QETLAB}: a {MATLAB} toolbox for quantum entanglement''.
\newblock  url:~\url{http://qetlab.com}.

\bibitem{navascues2007bounding}
Miguel Navascu{\'e}s, Stefano Pironio, and Antonio Ac{\'\i}n.
\newblock ``Bounding the set of quantum correlations''.
\newblock
  \href{https://dx.doi.org/https://doi.org/10.1103/PhysRevLett.98.010401}{Physical
  Review Letters {\bf 98}, 010401}~(2007).

\bibitem{navascues2008convergent}
Miguel Navascu{\'e}s, Stefano Pironio, and Antonio Ac{\'\i}n.
\newblock ``A convergent hierarchy of semidefinite programs characterizing the
  set of quantum correlations''.
\newblock
  \href{https://dx.doi.org/https://doi.org/10.1088/1367-2630/10/7/073013}{New
  Journal of Physics {\bf 10}, 073013}~(2008).

\bibitem{johnston2016extended}
Nathaniel Johnston, Rajat Mittal, Vincent Russo, and John Watrous.
\newblock ``Extended non-local games and monogamy-of-entanglement games''.
\newblock \href{https://dx.doi.org/10.1098/rspa.2016.0003}{Proceedings of the
  Royal Society A: Mathematical, Physical and Engineering Sciences {\bf 472},
  20160003}~(2016).

\bibitem{hoban2018channel}
Matty~J Hoban and Ana~Bel{\'e}n Sainz.
\newblock ``A channel-based framework for steering, non-locality and beyond''.
\newblock
  \href{https://dx.doi.org/https://doi.org/10.1088/1367-2630/aabea8}{New
  Journal of Physics {\bf 20}, 053048}~(2018).

\bibitem{rossi2022characterising}
Vinicius~P Rossi, Matty~J Hoban, and Ana~Bel{\'e}n Sainz.
\newblock ``On characterising assemblages in {Einstein}--{Podolsky}--{Rosen}
  scenarios''.
\newblock
  \href{https://dx.doi.org/https://doi.org/10.1088/1751-8121/ac7090}{Journal of
  Physics A: Mathematical and Theoretical {\bf 55}, 264002}~(2022).

\bibitem{clauser1969proposed}
John~F Clauser, Michael~A Horne, Abner Shimony, and Richard~A Holt.
\newblock ``Proposed experiment to test local hidden-variable theories''.
\newblock
  \href{https://dx.doi.org/https://doi.org/10.1103/PhysRevLett.23.880}{Physical
  review letters {\bf 23}, 880}~(1969).

\end{thebibliography}

\appendix

\section{Classical vs.~quantum steering}\label{app:compare}

In this section we briefly discuss the boundary between classical and quantum steering in our scenarios, and how this differs from the existing literature. 

From a quantum information perspective, steering has been studied within the context of certifying shared entanglement \cite{cavalcanti2015detection,uola2020quantum}. In general, Alice (who is not trusted) wants to convince Bob that they share a system prepared in an entangled state. Now, when considering a multiparty setup, in general there are two ingredients: (i) which parties are trusted (i.e., characterised) and which are not, and (ii) what type of entangled states we wish to certify. For example, one can have one uncharacterised Alice, two characterised Bobs, and wish to certify entanglement across the bipartition `Alice vs.~the Bobs'. Or one could have one uncharacterised Alice, two characterised Bobs, and wish to certify genuinely-tripartite entanglement (see Ref.~\cite[Eq.~(21)]{cavalcanti2015detection}). The crucial thing here then is that the starting point is the assumption of a quantum state $\rho_{AB_1B_2}$, and its entanglement properties are assessed by studying the joint state of the trusted parties $\sigma^{B_1B_2}_{a|x}$ (see Ref.~\cite[Eq.~(100)]{uola2020quantum} and Ref.~\cite[Table II]{cavalcanti2015detection}). Therefore, assemblages here include the information on how the characterised parties correlate.
This in stark contrast with the operational scenario we put forward in this manuscript. As a remark, notice that, in the traditional approach, if an assemblage $\{\sigma^{B_1B_2}_{a|x}\}$ is compatible with a fully separable $\rho_{AB_1B_2}$, then the correlations $p(ab_1b_2|xy_1y_1) = \tr{M^{B_1}_{b_1|y_1} \otimes M^{B_2}_{b_2|y_2} \, \sigma^{B_1B_2}_{a|x}}{}$ will never violate a Bell inequality. 

\bigskip

From a foundational viewpoint, one is interested in the question of when assemblages may admit of a classical explanation. Taking a step back then, given that the parties (trusted and untrusted) are distant agents performing space-like separated actions, one can argue that the only resources that may be shared among them `for free' are classical systems, and quantum operations and manipulation can only happen locally in each laboratory (see Fig.~\ref{f:negmod}). Assemblages that can arise in this manner have been known in the literature as those admitting of a \textit{local hidden state} model (LHS).  If one underpins the study of multipartite steering with this mind frame, then, the question of whether an assemblage is `steerable' indeed asks whether the assemblage cannot be explained via an LHS model. 

Now, in the traditional view of multipartite steering, for a scenario with one Alice and two Bobs, the assemblages are given by the elements $\{\sigma^{B_1B_2}_{a|x}\}$. Since a quantum description is assumed, then an LHS model will have the form: 
\begin{align}\label{eq:tradste}
\sigma^{B_1B_2}_{a|x} = \sum_{\lambda \in \Lambda} p(\lambda) \, p(a|x,\lambda) \, \rho^{B_1}_\lambda \otimes \rho^{B_2}_\lambda \,, 
\end{align}
where $\Lambda$ is the shared classical system,  $p(\lambda)$ is the probability that the system is prepared on state $\lambda$, $p(a|x,\lambda)$ is the local response function of Alice, and $\rho^{B_k}_\lambda$ is the (normalised) local quantum state preparation of the $k$-th Bob conditioned on the value of $\lambda$. Connecting back to entanglement certification, a multipartite assemblage that admits of such an LHS model is compatible with a fully-separable $\rho_{AB_1B_2}$.

\begin{figure}
\begin{center}
\begin{tikzpicture}[scale=0.4]
\node (ai) at (90:5) {} ;
\node (af) at (20:5) {} ;
\node (bi) at (270:5) {} ;
\node (bf) at (160:5) {} ;

\shade[draw, thick, color=gray!40!black ,rounded corners, inner color=white,outer color=gray!50!white] ($ (ai) + (-1,-1) $) rectangle ($ (ai) + (1,1) $) ;
\shade[draw, thick,color=gray!40!black ,rounded corners, inner color=white,outer color=gray!50!white] ($ (af) + (-1,-1) $) rectangle ($ (af) + (1,1) $) ;
\draw[thick, ,rounded corners] ($ (bi) + (-1,-1) $) rectangle ($ (bi) + (1,1) $) ;
\draw[thick, ,rounded corners] ($ (bf) + (-1,-1) $) rectangle ($ (bf) + (1,1) $) ;

\node at ($ (ai)$) {\small{$\mathrm{A}_1$}};
\node at ($ (af)$) {\small{$\mathrm{A}_{n}$}};
\node at ($ (bi)$) {\small{$\mathrm{B}_1$}};
\node at ($ (bf)$) {\small{$\mathrm{B}_N$}};

\node at ($ (ai) + (0,1.8) $) {\small{$p_1(a_1|x_1,\lambda)$}};
\node at ($ (af) + (3.5,-0.1) $) {\small{$p_n(a_n|x_n,\lambda)$}};
\node at ($ (bi) + (0,-1.8) $) {\small{$\rho^{1}_\lambda$}};
\node at ($ (bf) + (-2,0) $) {\small{$\rho^{N}_\lambda$}};

\node[draw,shape=circle,fill,scale=.2] at (45:4) {};
\node[draw,shape=circle,fill,scale=.2] at (55:4) {};
\node[draw,shape=circle,fill,scale=.2] at (65:4) {};
\node[draw,shape=circle,fill,scale=.2] at (210:4) {};
\node[draw,shape=circle,fill,scale=.2] at (200:4) {};
\node[draw,shape=circle,fill,scale=.2] at (220:4) {};

\draw[thick, ->] (0,0) -- ($ (ai) + (0,-1.3) $);
\draw[thick, ->] (0,0) -- ($ (af) + (-1.3,0) $);
\draw[thick, ->] (0,0) -- ($ (bi) + (0,1.3) $);
\draw[thick, ->] (0,0) -- ($ (bf) + (1.3,0) $);

\node[draw, shape=star,star points=7, fill,color=black,inner color=white,scale=1 ] at (0,0) {};
\node at (0,0) {S};

\node at (2.5,0.5) {\tiny{$\lambda$}};
\node at (0.5,2) {\tiny{$\lambda$}};
\node at (-1,-2) {\tiny{$\lambda$}};
\node at (-2.5,0.5) {\tiny{$\lambda$}};
\end{tikzpicture}
\end{center}
\caption{\textbf{Local-Hidden-State model for an assemblage.} 
Steering scenario with $n$ uncharacterised partes (Alices) and $N$ characterised parties (Bobs): A source produces the classical system on state $\lambda$ with  probability $p(\lambda)$ and sends it to the $n+N$ parties. The uncharacterised parties produce their outcomes via the response functions $p_j(a_j|x_j,\lambda)$, whereas the characterised ones locally prepare quantum systems on states $\rho^{k}_\lambda$. In a traditional description of steering, this model prepares an assemblage as per Eq.~\eqref{eq:tradste}. In our operational description of steering, this model explains assemblages as per Eq.~\eqref{eq:LHS}.}
\label{f:negmod}
\end{figure}
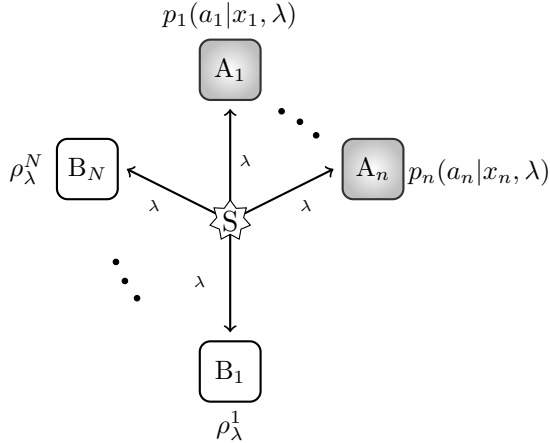

\bigskip  

Now let us consider the operational scenario we have in this manuscript. Here there is no assumption of an underpinning quantum explanation, but rather a description only in terms of the local information each of the parties has  (together with the correlations the Alices observe).   Here, a classical explanation for the assemblage is also given in terms of the LHS model described above. Formally, this reads as follows. 

\begin{defn}\textbf{Classical realisation of an assemblage.--}\\
Given an EPR scenario specified by $\T$, an assemblage $\As_\T$ admits of a classical realisation (a.k.a.~an LHS model) if there exists
\begin{compactenum}
\item a classical system\footnote{In general this can be an infinite-dimensional system, in which case the sum in Eq.~\ref{eq:LHS} turns into an integral. } $\Lambda$, 
\item a probability distribution $p(\lambda)$ for $\lambda \in \Lambda$, 
\item a response function (normalised conditional probability distribution)  $p_j(a_j|x_j,\lambda)$ for each Alice ($j=1:n$), and 
\item a local (normalised) quantum state $\rho^{k}_\lambda$ for each Bob ($k=1:N$) 
\end{compactenum}
such that: 
\begin{align}\label{eq:LHS}
\sigma^{(k)}_{\mathbf{a}|\mathbf{x}} = \sum_{\lambda \in \Lambda} p(\lambda) \, \rho^{k}_\lambda \,  \prod_{j=1:n} \, p_j(a_j|x_j,\lambda)  \,, \quad \forall \, k,\mathbf{a},\mathbf{x}\,.
\end{align}
This classical model is depicted in Fig.~\ref{f:negmod}. 
\end{defn}

\begin{rem}
As a first remark, notice that an assemblage in our steering scenario admits of an LHS model if it also admits of an LHS model when described in the traditional framing of Eq.~\eqref{eq:tradste}.
\end{rem}

As we see next, however, admitting of an LHS model in our operational scenario does not imply that the correlations $p(\mathbf{a}\mathbf{b}|\mathbf{x}\mathbf{y})$ cannot violate a Bell inequality. 

\begin{prop}
There exists assemblages $\As_\T$ that admit of an LHS model  but that, when explained by quantum theory and for  some local measurements by the Bobs, the correlations $p(\mathbf{a}\mathbf{b}|\mathbf{x}\mathbf{y})$ violate a Bell inequality. 
\end{prop}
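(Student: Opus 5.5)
We prove the statement with an explicit example in the simplest scenario with more than one Bob. The guiding idea is that the operational assemblage keeps only the Bobs' local marginals. Any pattern of Bell-violating correlations whose single-Bob marginals are "boring" therefore has an LHS model at the assemblage level.

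\textbf{Choice of example.} We take $\T=(n=1,|\X|=2,|\A|=2,N=2,d_1=2,d_2=2)$. Alice holds no system and outputs $a$ uniformly at random, independently of $x$. The two Bobs share the maximally entangled state $\ket{\phi^+}=(\ket{00}+\ket{11})/\sqrt{2}$, so the global quantum state is $\frac{\id_2}{2}\otimes\ket{\phi^+}\bra{\phi^+}$ with Alice's qubit in the maximally mixed state. Alice measures in the computational basis regardless of $x$. The resulting assemblage is
\begin{align}
\sigma^{(k)}_{a|x}=\tfrac12\cdot\tfrac{\id_2}{2}\,,\quad \forall\, k,a,x\,,
\end{align}
and it is manifestly quantumly realisable.

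\textbf{Step 1: LHS model.} We exhibit the model of Eq.~\eqref{eq:LHS} with a single trivial hidden variable: $|\Lambda|=1$, $p_1(a|x,\lambda)=\tfrac12$, and $\rho^{k}_\lambda=\id_2/2$ for both Bobs. This reproduces every assemblage element, so $\As_\T$ admits of a classical realisation.

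\textbf{Step 2: Bell violation.} Under the quantum explanation above, the Bobs perform the CHSH-optimal measurements \cite{clauser1969proposed}: $Z$ and $X$ for the first Bob, and $(Z\pm X)/\sqrt2$ for the second. On $\ket{\phi^+}$ the correlations $p(b_1b_2|y_1y_2)$ reach the Tsirelson value $2\sqrt2>2$. The full distribution factorises as $p(ab_1b_2|xy_1y_2)=\tfrac12\,p(b_1b_2|y_1y_2)$, so its marginal on the Bobs violates CHSH. A tripartite local model would induce a bipartite local model on that marginal, so $p(a\mathbf{b}|x\mathbf{y})$ violates a Bell inequality, namely CHSH lifted trivially to include Alice.

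\textbf{Main obstacle.} The only delicate point is interpretive, not technical. The statement could be read as requiring the violation to involve Alice nontrivially, or as requiring the assemblage to be steerable in the traditional sense of Eq.~\eqref{eq:tradste}. In the traditional framing our example fails, since $\sigma^{B_1B_2}_{a|x}=\tfrac12\ket{\phi^+}\bra{\phi^+}$ admits no product LHS decomposition; we will point this out explicitly as the source of the discrepancy.

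If a version with Alice genuinely involved is desired, we would replace the state with a GHZ state and use Mermin measurements. Each single-Bob assemblage would then still be $\sigma^{(k)}_{a|x}=\tfrac12\cdot\tfrac{\id}{2}$, because the GHZ single-qubit conditional marginals are maximally mixed. This requires checking that $\sigma^{(k)}_{a|x}$ is independent of $a$ and $x$ for Alice's $X$ and $Y$ measurements. Once that check passes, the same trivial LHS model applies while the Mermin inequality is violated.
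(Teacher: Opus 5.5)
Your proposal is correct and is essentially the paper's own proof. The paper likewise keeps Alice's qubit maximally mixed and uncorrelated and lets the Bobs share a maximally entangled pair; it uses $\ket{\Psi^+}$ with Alice measuring in the $X$ and $Z$ bases, which changes nothing. It then obtains $\sigma^{(k)}_{a|x}=\id/4$ with the trivial LHS model and concludes from $p(\mathbf{a}\mathbf{b}|\mathbf{x}\mathbf{y})=\tfrac12\,p(\mathbf{b}|\mathbf{y})$, whose Bob marginal violates CHSH maximally. You also fill in two small details the paper leaves implicit: the $1/\sqrt{2}$ normalisation of the second Bob's observables, and the reduction of a tripartite local model to a local model on the Bobs' marginal. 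Your GHZ/Mermin variant is a correct bonus: the single-Bob conditional states are indeed maximally mixed for Alice's $X$ and $Y$ measurements, and there Alice takes part in the Bell violation nontrivially.
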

\begin{proof}
We prove this proposition by finding an example. Consider a steering scenario $\T = \left(n=1,|\X|=2,|\A|=2,N=2,d_1=2,d_2=2 \right)$. Now consider the assemblage $\As_\T$ that arises from Alice performing measurements 
$\{M_{0|0} = \ket{+}\bra{+}, M_{1|0} = \ket{-}\bra{-}\}$ and $\{M_{0|1} = \ket{0}\bra{0}, M_{1|1}=\ket{1}\bra{1}\}$ 
on the three-qubit system prepared in the state
$\frac{\id}{2} \otimes \ket{\Psi^+}\bra{\Psi^+}$, where $\ket{\Psi} = \frac{\ket{01}+\ket{10}}{\sqrt{2}}$. This assemblage has elements
\begin{align*}
\As_\T \,: \quad 
 \sigma^{(k)}_{a|x} =  \frac{1}{4}
\begin{bmatrix}
    1    &     0 \\
         0    &     1
\end{bmatrix} 
\,,\quad \forall \, a,x \quad k=1,2\,.
\end{align*}
On the one hand, $\As_\T$ admits of a simple LHS model where we don't even need to share a classical system $\Lambda$ to correlate the parties: Alice flips an unbiased coin regardless of $x$, rendering $p(a|x)=\tfrac{1}{2}$, and each Bob prepares locally a maximally mixed state $\rho^k = \frac{\id}{2}$. 

On the other hand, consider the correlations $p(\mathbf{a}\mathbf{b}|\mathbf{x}\mathbf{y})$ that arise when the first Bob measures 
\begin{align*}
M^{B_1}_{b|0} = \frac{\id + (-1)^b X}{2} \,,\quad M^{B_1}_{b|1} = \frac{\id + (-1)^b Z}{2}\,,
\end{align*}
and the second Bob measures
\begin{align*}
M^{B_2}_{b|0} = \frac{\id + (-1)^b (X+Z)}{2} \,,\quad M^{B_2}_{b|1} = \frac{\id + (-1)^b (X-Z)}{2}\,,
\end{align*}
where recall that $X$ and $Z$ are Pauli matrices. 

These correlations have the form $p(\mathbf{a}\mathbf{b}|\mathbf{x}\mathbf{y}) = \frac{1}{2} p(\mathbf{b}|\mathbf{y})$, where  $p(\mathbf{b}|\mathbf{y})=p(b_1b_2|y_1y_2)$  violates the CHSH inequality maximally \cite{clauser1969proposed}. Hence, $p(\mathbf{a}\mathbf{b}|\mathbf{x}\mathbf{y})$ cannot admit of a classical model. 
\end{proof}

\bigskip

We see then that in our theory-independent operational approach the relationship between nonclassicality in steering and in Bell scenarios is less straightforward than in scenarios with only one characterised party. Here we see that one can have classically-realisable assemblages that yield nonclassical correlations when the Bobs perform local measurements. This parallels the situation we found in Sec.~\ref{se:pqnwpqs}, where assemblages that admit of a quantum realisation may yield postquantum correlations in the corresponding Bell scenario.  We see then that the phenomena of steering and Bell nonlocality do not form a strict hierarchy, as usually thought.  


\section{Numerical tolerances}\label{ap:A}

A natural question in postquantum certification via Thm.~\ref{thm:hasmarginal} is how much should one relax the constraints of the semidefinite program in order to accommodate a solution. One possibility is to allow the matrices $\{\sigma_{\mathbf{a}|\mathbf{x}}\}$ to be slightly negative. Denote by $\lambda$ the smallest of all the eigenvalues of all the matrices $\{\sigma_{\mathbf{a}|\mathbf{x}}\}$  (recall these matrices are the variables in our optimisation problem).  One can then ask what the  least non-positive  value of $\lambda$ is which can accommodate a solution for the SDP of Thm.~\ref{thm:hasmarginal} for a given assemblage.  For instance, if a solution to the optimisation program exists with $\lambda=0$ then the assemblage actually admits of parent sates $\{\sigma_{\mathbf{a}|\mathbf{x}}\}$. If a solution only exists with $\lambda < 0$ then parent states do not exist.  
 Notice that $\lambda$ should not be thought of as  a measure of postquantum steering, but rather as a way to assess the numerical robustness of a given assemblage. In the following table, we include the value of $\lambda$ for each of the numerical examples included in this work. 

\begin{center}
\begin{tabular}{|c|c|}
\hline
Example & $\lambda$ \\ 
\hline 
ABB\_1 & $-0.00185$ \\
ABB\_2 & $-0.000157$ \\
ABB\_PTP\_1 & $-0.000104$ \\
ABB\_PTP\_2 & $-0.000085$ \\
AABB\_PTP\_1 & $-0.00027$\\
AABB\_PTP\_2 & $-0.00019$\\
\hline
\end{tabular}
\end{center}
We see that the required negativity is well above the numerical precision, which shows our examples are not mere numerical instabilities.

\end{document}